\setlist{nosep, leftmargin=*}
\numberwithin{equation}{section}
\begin{document}


\renewcommand{\theequation}{\arabic{section}.\arabic{equation}}
\theoremstyle{plain}
\newtheorem{theorem}{\bf Theorem}[section]
\newtheorem{lemma}[theorem]{\bf Lemma}
\newtheorem{corollary}[theorem]{\bf Corollary}
\newtheorem{proposition}[theorem]{\bf Proposition}
\newtheorem{definition}[theorem]{\bf Definition}
\newtheorem*{definition*}{\bf Definition}
\newtheorem*{example}{\bf Example}
\newtheorem*{theorem*}{\bf Theorem}
\theoremstyle{remark}
\newtheorem*{remark}{\bf Remark}

\def\a{\alpha}  \def\cA{{\mathcal A}}     \def\bA{{\bf A}}  \def\mA{{\mathscr A}}
\def\b{\beta}   \def\cB{{\mathcal B}}     \def\bB{{\bf B}}  \def\mB{{\mathscr B}}
\def\g{\gamma}  \def\cC{{\mathcal C}}     \def\bC{{\bf C}}  \def\mC{{\mathscr C}}
\def\G{\Gamma}  \def\cD{{\mathcal D}}     \def\bD{{\bf D}}  \def\mD{{\mathscr D}}
\def\d{\delta}  \def\cE{{\mathcal E}}     \def\bE{{\bf E}}  \def\mE{{\mathscr E}}
\def\D{\Delta}  \def\cF{{\mathcal F}}     \def\bF{{\bf F}}  \def\mF{{\mathscr F}}
\def\c{\chi}    \def\cG{{\mathcal G}}     \def\bG{{\bf G}}  \def\mG{{\mathscr G}}
\def\z{\zeta}   \def\cH{{\mathcal H}}     \def\bH{{\bf H}}  \def\mH{{\mathscr H}}
\def\e{\eta}    \def\cI{{\mathcal I}}     \def\bI{{\bf I}}  \def\mI{{\mathscr I}}
\def\p{\psi}    \def\cJ{{\mathcal J}}     \def\bJ{{\bf J}}  \def\mJ{{\mathscr J}}
\def\vT{\Theta} \def\cK{{\mathcal K}}     \def\bK{{\bf K}}  \def\mK{{\mathscr K}}
\def\k{\kappa}  \def\cL{{\mathcal L}}     \def\bL{{\bf L}}  \def\mL{{\mathscr L}}
\def\l{\lambda} \def\cM{{\mathcal M}}     \def\bM{{\bf M}}  \def\mM{{\mathscr M}}
\def\L{\Lambda} \def\cN{{\mathcal N}}     \def\bN{{\bf N}}  \def\mN{{\mathscr N}}
\def\m{\mu}     \def\cO{{\mathcal O}}     \def\bO{{\bf O}}  \def\mO{{\mathscr O}}
\def\n{\nu}     \def\cP{{\mathcal P}}     \def\bP{{\bf P}}  \def\mP{{\mathscr P}}
\def\r{\varrho} \def\cQ{{\mathcal Q}}     \def\bQ{{\bf Q}}  \def\mQ{{\mathscr Q}}
\def\s{\sigma}  \def\cR{{\mathcal R}}     \def\bR{{\bf R}}  \def\mR{{\mathscr R}}
\def\S{\Sigma}  \def\cS{{\mathcal S}}     \def\bS{{\bf S}}  \def\mS{{\mathscr S}}
\def\t{\tau}    \def\cT{{\mathcal T}}     \def\bT{{\bf T}}  \def\mT{{\mathscr T}}
\def\f{\phi}    \def\cU{{\mathcal U}}     \def\bU{{\bf U}}  \def\mU{{\mathscr U}}
\def\F{\Phi}    \def\cV{{\mathcal V}}     \def\bV{{\bf V}}  \def\mV{{\mathscr V}}
\def\P{\Psi}    \def\cW{{\mathcal W}}     \def\bW{{\bf W}}  \def\mW{{\mathscr W}}
\def\o{\omega}  \def\cX{{\mathcal X}}     \def\bX{{\bf X}}  \def\mX{{\mathscr X}}
\def\x{\xi}     \def\cY{{\mathcal Y}}     \def\bY{{\bf Y}}  \def\mY{{\mathscr Y}}
\def\X{\Xi}     \def\cZ{{\mathcal Z}}     \def\bZ{{\bf Z}}  \def\mZ{{\mathscr Z}}
\def\O{\Omega}

\newcommand{\mc}{\mathscr {c}}

\newcommand{\gA}{\mathfrak{A}}          \newcommand{\ga}{\mathfrak{a}}
\newcommand{\gB}{\mathfrak{B}}          \newcommand{\gb}{\mathfrak{b}}
\newcommand{\gC}{\mathfrak{C}}          \newcommand{\gc}{\mathfrak{c}}
\newcommand{\gD}{\mathfrak{D}}          \newcommand{\gd}{\mathfrak{d}}
\newcommand{\gE}{\mathfrak{E}}
\newcommand{\gF}{\mathfrak{F}}           \newcommand{\gf}{\mathfrak{f}}
\newcommand{\gG}{\mathfrak{G}}           \newcommand{\Gg}{\mathfrak{g}}
\newcommand{\gH}{\mathfrak{H}}           \newcommand{\gh}{\mathfrak{h}}
\newcommand{\gI}{\mathfrak{I}}           \newcommand{\gi}{\mathfrak{i}}
\newcommand{\gJ}{\mathfrak{J}}           \newcommand{\gj}{\mathfrak{j}}
\newcommand{\gK}{\mathfrak{K}}            \newcommand{\gk}{\mathfrak{k}}
\newcommand{\gL}{\mathfrak{L}}            \newcommand{\gl}{\mathfrak{l}}
\newcommand{\gM}{\mathfrak{M}}            \newcommand{\gm}{\mathfrak{m}}
\newcommand{\gN}{\mathfrak{N}}            \newcommand{\gn}{\mathfrak{n}}
\newcommand{\gO}{\mathfrak{O}}
\newcommand{\gP}{\mathfrak{P}}             \newcommand{\gp}{\mathfrak{p}}
\newcommand{\gQ}{\mathfrak{Q}}             \newcommand{\gq}{\mathfrak{q}}
\newcommand{\gR}{\mathfrak{R}}             \newcommand{\gr}{\mathfrak{r}}
\newcommand{\gS}{\mathfrak{S}}              \newcommand{\gs}{\mathfrak{s}}
\newcommand{\gT}{\mathfrak{T}}             \newcommand{\gt}{\mathfrak{t}}
\newcommand{\gU}{\mathfrak{U}}             \newcommand{\gu}{\mathfrak{u}}
\newcommand{\gV}{\mathfrak{V}}             \newcommand{\gv}{\mathfrak{v}}
\newcommand{\gW}{\mathfrak{W}}             \newcommand{\gw}{\mathfrak{w}}
\newcommand{\gX}{\mathfrak{X}}               \newcommand{\gx}{\mathfrak{x}}
\newcommand{\gY}{\mathfrak{Y}}              \newcommand{\gy}{\mathfrak{y}}
\newcommand{\gZ}{\mathfrak{Z}}             \newcommand{\gz}{\mathfrak{z}}

\def\ve{\varepsilon}   \def\vt{\vartheta}    \def\vp{\varphi}    \def\vk{\varkappa}

\def\A{{\mathbb A}} \def\B{{\mathbb B}} \def\C{{\mathbb C}}
\def\dD{{\mathbb D}} \def\E{{\mathbb E}} \def\dF{{\mathbb F}} \def\dG{{\mathbb G}}
\def\H{{\mathbb H}}\def\I{{\mathbb I}} \def\J{{\mathbb J}} \def\K{{\mathbb K}} \def\dL{{\mathbb L}}
\def\M{{\mathbb M}} \def\N{{\mathbb N}} \def\O{{\mathbb O}} \def\dP{{\mathbb P}} \def\R{{\mathbb R}}
\def\dQ{{\mathbb Q}} \def\S{{\mathbb S}} \def\T{{\mathbb T}} \def\U{{\mathbb U}} \def\V{{\mathbb V}}
\def\W{{\mathbb W}} \def\X{{\mathbb X}} \def\Y{{\mathbb Y}} \def\Z{{\mathbb Z}}

\newcommand{\1}{\mathbbm 1}
\newcommand{\dd}    {\, \mathrm d}



\def\la{\leftarrow}              \def\ra{\rightarrow}            \def\Ra{\Rightarrow}
\def\ua{\uparrow}                \def\da{\downarrow}
\def\lra{\leftrightarrow}        \def\Lra{\Leftrightarrow}


\def\lt{\biggl}                  \def\rt{\biggr}
\def\ol{\overline}               \def\wt{\widetilde}
\def\no{\noindent}


\let\ge\geqslant                 \let\le\leqslant
\def\lan{\langle}                \def\ran{\rangle}
\def\/{\over}                    \def\iy{\infty}
\def\sm{\setminus}               \def\es{\emptyset}
\def\ss{\subset}                 \def\ts{\times}
\def\pa{\partial}                \def\os{\oplus}
\def\om{\ominus}                 \def\ev{\equiv}
\def\iint{\int\!\!\!\int}        \def\iintt{\mathop{\int\!\!\int\!\!\dots\!\!\int}\limits}
\def\el2{\ell^{\,2}}             \def\1{1\!\!1}
\def\sh{\sharp}
\def\wh{\widehat}
\def\ds{\dotplus}

\def\all{\mathop{\mathrm{all}}\nolimits}
\def\where{\mathop{\mathrm{where}}\nolimits}
\def\as{\mathop{\mathrm{as}}\nolimits}
\def\Area{\mathop{\mathrm{Area}}\nolimits}
\def\arg{\mathop{\mathrm{arg}}\nolimits}
\def\adj{\mathop{\mathrm{adj}}\nolimits}
\def\const{\mathop{\mathrm{const}}\nolimits}
\def\det{\mathop{\mathrm{det}}\nolimits}
\def\diag{\mathop{\mathrm{diag}}\nolimits}
\def\diam{\mathop{\mathrm{diam}}\nolimits}
\def\dim{\mathop{\mathrm{dim}}\nolimits}
\def\dist{\mathop{\mathrm{dist}}\nolimits}
\def\Im{\mathop{\mathrm{Im}}\nolimits}
\def\Iso{\mathop{\mathrm{Iso}}\nolimits}
\def\Ker{\mathop{\mathrm{Ker}}\nolimits}
\def\Lip{\mathop{\mathrm{Lip}}\nolimits}
\def\rank{\mathop{\mathrm{rank}}\limits}
\def\Ran{\mathop{\mathrm{Ran}}\nolimits}
\def\Re{\mathop{\mathrm{Re}}\nolimits}
\def\Res{\mathop{\mathrm{Res}}\nolimits}
\def\res{\mathop{\mathrm{res}}\limits}
\def\sign{\mathop{\mathrm{sign}}\nolimits}
\def\supp{\mathop{\mathrm{supp}}\nolimits}
\def\Tr{\mathop{\mathrm{Tr}}\nolimits}
\def\AC{\mathop{\rm AC}\nolimits}
\def\BBox{\hspace{1mm}\vrule height6pt width5.5pt depth0pt \hspace{6pt}}


\newcommand\nh[2]{\widehat{#1}\vphantom{#1}^{(#2)}}
\def\dia{\diamond}

\def\Oplus{\bigoplus\nolimits}




\def\qqq{\qquad}
\def\qq{\quad}
\let\ge\geqslant
\let\le\leqslant
\let\geq\geqslant
\let\leq\leqslant

\newcommand{\ca}{\begin{cases}}
\newcommand{\ac}{\end{cases}}
\newcommand{\ma}{\begin{pmatrix}}
\newcommand{\am}{\end{pmatrix}}
\renewcommand{\[}{\begin{equation}}
\renewcommand{\]}{\end{equation}}
\def\bu{\bullet}

\title[{}]{Stability of Dirac resonances}

\date{\today}

\author[Dmitrii Mokeev]{Dmitrii Mokeev}
\address{Saint Petersburg State University, Universitetskaya nab. 7/9, 
St. Petersburg, 199034, Russia, \ mokeev.ds@yandex.ru}

\subjclass{} \keywords{Dirac operators, inverse problems, resonances, stability}

\begin{abstract}
	We prove that the class of resonances of Dirac operators on the half-line with 
	compactly supported potentials is closed with respect to $\ell^1$ perturbations. 
	We also prove that the potential depends continuously on such perturbations. 
	We show that similar results hold true for the Jost functions and Hermite-Biehler functions
	associated with Dirac operators.
\end{abstract}

\maketitle

\section{Introduction and main results}
\subsection{Introduction}
We consider the self-adjoint Dirac operator $H$ on $L^2(\R_+,\C^2)$
given by
$$
    H y = -i \s_3 y' + i \s_3 Q y,\qq y = \ma y_1 \\ y_2 \am,\qq \s_3 = \ma 1 & 0 \\ 0 & -1 \am,
$$
with the Dirichlet boundary condition
$$
	y_1(0) - y_2(0) = 0.
$$
The potential $Q$ has the following form
$$
    Q = \ma 0 & q \\ \overline{q} & 0 \am,\qq q \in \cP,
$$
where the class $\cP$ is defined for some $\g > 0$ fixed throughout
this paper by
\begin{definition*}
    $\cP = \cP_{\g}$ is a metric space of all functions $q \in L^2(\R_+)$ such that
	$\sup \supp q = \g$ equipped with the metric
	$$
		\rho_{\cP}(q_1,q_2) = \|q_1 - q_2\|_{L^2(0,\g)},\qq q_1,q_2 \in \cP.
	$$
\end{definition*}

It is well known that $\s(H) = \s_{ac}(H) = \R$ (see, e.g., \cite{LS91}).
For any $z \in \C$, we introduce the $2 \times 2$ matrix-valued Jost solution $f(x,z) =  \left(
\begin{smallmatrix} f_{11} & f_{12} \\ f_{21} & f_{22} \end{smallmatrix} \right) (x,z)$
of the Dirac equation
$$
    f'(x,z) = Q(x) f(x,z) + i z \s_3 f(x,z),\qq (x,z) \in \R_+ \ts \C,
$$
which satisfies the standard condition for compactly supported potentials:
$$
    f(x,z) = e^{i z x \s_3},\qq \forall \qq (x,z) \in [\g,+\iy) \ts \C.
$$
We define a Jost function $\psi: \C \to \C$ for the operator $H$ by
$$
    \psi(z) = f_{11} (0,z) - f_{21} (0,z),\qq z \in \C.
$$
It is well-known that $\psi$ is entire, $\psi(z) \neq 0$ for any $z
\in \ol \C_+ $ and it has zeros in $\C_-$, which are called
\textit{resonances} and a multiplicity of a resonance is a multiplicity of the zero of $\p$ 
(see, e.g., \cite{IK14b}).
Moreover, it was shown in \cite{IK14b} that the resonances are also zeros of the Fredholm
determinant and poles of the resolvent of the operator $H$.
We introduce a scattering matrix $S:\R \to \C$ for the operator $H$ by
$$
    S(z) = \frac{\ol\psi(z)}{\psi(z)} = e^{-2 i \arg \psi(z)},\qq z \in \R.
$$
The function $S$ admits a meromorphic continuation from $\R$ onto $\C$, since $\psi$ is entire.
Moreover, poles of $S$ are zeros of $\psi$ and then they are resonances.
Note that we sometimes write 
$\psi(\cdot,q)$, $S(\cdot,q)$,$\ldots$ instead of $\psi(\cdot)$, $S(\cdot)$, $\ldots$
when several potentials are being dealt with.

An inverse problem for operator $H$ consists of recovering the potential $q$ in specified
class by some spectral data. As spectral data one can consider, for instance, the 
scattering matrix $S$, the Jost function $\psi$ or resonances. The inverse problem for operator 
$H$ in terms of the scattering matrix is intensively studied in connection with the nonlinear
Schr{\"o}dinger equation (see, e.g., \cite{ZS71, APT04, DEGM82, FT07}). 
Inverse problems for the operator $H$ 
in terms of resonances and the Jost function have been studied much less. 
For the potentials $q \in \cP$ these problems were considered in \cite{KM20a}.
Moreover, in this paper, the characterization of the scattering matrix for $q \in \cP$
was given. In order to present these results, we introduce the Fourier transform $\cF$ on $L^2(\R)$ by
$$
	(\cF g)(k) = \int_{\R} g (s) e^{2iks} ds,\qq k \in \R.
$$
Then its inverse $\cF^{-1}$ on $L^2(\R)$ is given by
$$
	(\cF^{-1} g)(s) = \frac{1}{\pi} \int_{\R} g(k) e^{-2iks} dk,\qq s \in \R.
$$
We will use the notation $\hat g = \cF^{-1} g$. Note that if we apply the direct or inverse Fourier 
transform to a function $g$ defined on $I \ss \R$, then we extend $g$ to the whole line by making 
it zero outside $I$. 
Now, we introduce a class of Jost functions from \cite{KM20a}.
\begin{definition*}
	$\cJ = \cJ_{\g}$ is a metric space of all entire functions $\psi$ such that 
	\[	\label{p2e1}
		\psi(z) = 1  + \cF g(z),\qq z \in \C,
	\]
	for some $g \in \cP$ and $\psi(z) \neq 0$ for any $z \in \ol \C_+$
    equipped with the metric
    $$
        \rho_{\cJ}(\psi_1,\psi_2) = \| \cF^{-1}(\psi_1 - \psi_2) \|_{\cP},\qq \psi_1,\psi_2 \in \cJ.
    $$
\end{definition*}
We define the circle $\S^1 = \{\,z \in \C \, \mid \, |z| = 1 \, \}$.
Let $g:\R \to \S^1$ be a continuous function
such that $g(x) = C + o(1)$ as $x \to \pm \iy$ for some $C \in \S^1$.
Then $g = e^{-2i\phi}$ for some continuous $\phi:\R \to \R$.
We introduce a winding number $W(g) \in \Z$ by
$$
    W(g) = \frac{1}{\pi}\left(\lim_{x \to +\iy} \phi(x) - \lim_{x \to -\iy} \phi(x) \right),
$$
i.e., $W(g)$ is a number of revolutions of $g(x)$ around $0$, when
$x$ runs through $\R$. Now, we introduce class of scattering matrices from \cite{KM20a}.
\begin{definition*}
	$\cS = \cS_{\g}$ is a metric space of all continuous functions $S:\R \to \S^1$ such that:
	\begin{enumerate}[label={\roman*)}]
		\item $W(S) = 0$;
		\item $S(z) = 1  + \cF F(z)$, $z \in \R$, for some $F \in L^1(\R) \cap L^2(\R)$ such that $\inf \supp F = -\g$;
	\end{enumerate}
	equipped with the metric
    $$
        \rho_{\cS}(S_1,S_2) = \|\cF^{-1}(S_1 - S_2)\|_{L^2(-\g,+\iy)} + \|\cF^{-1}(S_1 - S_2)\|_{L^1(-\g,+\iy)},\qq
        S_1,S_2 \in \cS.
    $$
\end{definition*}

We need the following results from \cite{KM20a} about inverse problem 
for the operator $H$ in terms of the Jost function, the scattering matrix and resonances.
\begin{theorem} \label{thm:inv_jost}
	\begin{enumerate}[label={\roman*)}]
        \item The mapping $q \mapsto S(\cdot,q)$ from $\cP$ to $\cS$ is a homeomorphism;
        \item The mapping $q \mapsto \psi(\cdot,q)$ from $\cP$ to $\cJ$ is a homeomorphism;
		\item The potential $q \in \cP$ is uniquely determined by zeros of $\psi(\cdot,q) \in \cJ$.
    \end{enumerate}
\end{theorem}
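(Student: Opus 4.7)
The natural route is to prove (i) first by adapting the Marchenko inverse-scattering construction to the Dirac system with compactly supported potentials; (ii) then follows from (i) via the identity $S=\ol\psi/\psi$ together with an outer-factorisation argument, and (iii) is obtained from (ii) via a Hadamard factorisation of the Jost function. For (i) I would begin from the transformation-operator representation
$$
f(x,z) = e^{izx\s_3} + \int_{x}^{2\g-x} K(x,t)\,e^{izt\s_3}\dd t,
$$
with a $2\ts2$ Volterra kernel $K$ supported in $\{(x,t):\, x\le t\le 2\g-x\}$, satisfying an integral equation equivalent to the Dirac system. Specialising to $x=0$ and taking the combination $f_{11}-f_{21}$ produces $\psi(z)=1+\cF g(z)$ for an explicit $g\in L^{2}(0,\g)$ with $\sup\supp g=\g$, and dividing by $\ol\psi$ then gives $S=1+\cF F$ with $\inf\supp F=-\g$, so the forward map sends $\cP$ into $\cS$. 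Continuity of $q\mapsto S$ in the prescribed metrics is routine from the Volterra estimates on $K$ together with Plancherel's identity. For the inverse I would set $F=\cF^{-1}(S-1)$, write the Marchenko equation on $(x,2\g-x)$, use $\s(H)=\R$ (which rules out bound states) to invert the associated operator $I+$(Hilbert--Schmidt), solve for $K$, and recover $q$ from the boundary values $K(x,x)$. Continuity of the inverse follows by comparing the Marchenko operators of two nearby $S\in\cS$ in the $L^{2}(0,2\g)$ operator norm.

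\textbf{From (i) to (ii) and (iii).} Once (i) is available, (ii) reduces to showing that $\psi\mapsto S=\ol\psi/\psi$ is a homeomorphism $\cJ\to\cS$. This is a standard outer-factorisation argument: since $\psi$ has no zeros in $\ol\C_+$ and $\psi-1\in H^{2}(\C_+)$ by Paley--Wiener with $\psi(iy)\to 1$ as $y\to+\iy$, $\psi$ is the unique outer function in $\C_+$ with modulus $|\psi|$ normalised by $\psi(+i\iy)=1$, so it is reconstructed from $|\psi|^{2}$ (equivalently from $S$) via the Poisson--Szego formula; continuity reduces in each direction to continuity of the logarithm in the Wiener algebra. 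For (iii), the representation $\psi=1+\cF g$ with $g\in L^{2}(0,\g)$ implies by Paley--Wiener that $\psi$ is entire of exponential type $2\g$, bounded on $\R$, nonzero in $\ol\C_+$, and tends to $1$ along the positive imaginary axis. Hadamard's factorisation theorem together with these normalisations determines $\psi$ uniquely from its zero set counted with multiplicity, and combining with (ii) yields (iii).

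\textbf{Main obstacle.} The most delicate step is the surjectivity and continuity of the inverse in (i): one must show that the support condition built into $\cS$, namely $\inf\supp F=-\g$, is reproduced exactly as $\sup\supp q=\g$ (and not some strictly smaller value) after the Marchenko inversion. This amounts to a non-vanishing statement for the Marchenko kernel $K$ at the endpoint $t=2\g-x$, which requires a careful Paley--Wiener asymptotic analysis near that endpoint. The remainder of the argument is essentially bookkeeping of standard Volterra, Fourier, and Hilbert--Schmidt estimates.
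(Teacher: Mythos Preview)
The paper does not prove this theorem at all: it is quoted verbatim as a result ``from \cite{KM20a}'' and used as a black box throughout. There is therefore no proof in the present paper to compare your proposal against.

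That said, your outline is a plausible reconstruction of how such a result is typically established, and it is consistent with the surrounding text (e.g., the paper remarks that one recovers the potential ``using the Gelfand--Levitan--Marchenko equations''). One small inaccuracy: in your passage from (i) to (ii) you write that $\psi$ is ``reconstructed from $|\psi|^2$ (equivalently from $S$)''. On the real line $|S|=1$, so $S=\ol\psi/\psi$ encodes $\arg\psi$, not $|\psi|$; the outer-function recovery goes through the harmonic conjugate of $\arg\psi$ (equivalently, a Hilbert-transform relation between $\log|\psi|$ and $\arg\psi$), not through the modulus directly. This does not affect the validity of the strategy, only the phrasing.
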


This theorem solves the inverse problem in terms of resonances (\textit{uniqueness} and \textit{characterization})
in the following way: the potential $q \in \cP$ is uniquely determined by resonances, 
which are zeros of some $\psi \in \cJ$. That is, the sequence $(k_n)_{n \geq 1}$ such that 
$k_n \in \C_-$, $n \geq 1$, are the sequence of resonances for some $q \in \cP$ 
if and only if $(k_n)_{n \geq 1}$ are zeros of some $\psi \in \cJ$. Moreover, it was shown in 
\cite{KM20a}, how to recover the Jost function and the scattering matrix from resonances and then 
we can recover the potential using the Gelfand-Levitan-Marchenko equations.

The next question that can be posed is a \textit{stability} of this inverse problem. That is, how
the resonances for some $q_o \in \cP$ can be perturbed such that we also obtain the sequence of 
resonances for some $q \in \cP$. And related question is a \textit{continuity} of this inverse problem.
That is, does the potential depends continuously in some sense on the perturbation of resonances. 

In paper \cite{KM20a}, these problems were solved for the perturbation of a finite number of resonances. 
Namely, it was shown that if we have a sequence of resonances for 
some $q \in \cP$ and we arbitrarily shift a finite number of resonances, then we obtain 
the sequence of resonances which is associated with another potential from $\cP$ and 
the potential depends continuously on a distance between resonances.

Our main goal is to solve the global stability and continuity problems for the resonances of operator $H$, 
when infinitely many resonances are involved.

\subsection{Main result}
In order to formulate main result, we introduce the Banach space $\ell^1$ as 
a set of all sequences of complex numbers $\z = (\z_n)_{n \geq 1}$ equipped with the norm
$\displaystyle \|\z\|_{\ell^1} = \sum_{n \geq 1} |\z_n|$.
Let $\k = (k_n)_{n \geq 1}$ be a sequence of numbers from $\C_-$ such that 
$|k_1| \leq |k_2| \leq \ldots$. Then, by 
$q(\cdot,\k)$, we denote the potential such that $(k_n)_{n \geq 1}$ are its resonances, 
if such potential exists. Now, we give our main result.
\begin{theorem} \label{thm:main_thm}
	Let $\k^o = (k_n^o)_{n \geq 1}$ be zeros of $\psi(\cdot,q_o)$ for some $q_o \in \cP$ 
	arranged that $0 < |k^o_1| \leq |k^0_2| \leq \ldots$ and 
	let $\r = (\r_n)_{n \geq 1} \in \ell^1$ be such that $k_n = k_n^o + \r_n \in \C_-$ 
	for each $n \geq 1$. Then there exists a unique $q \in \cP$ such that 
	$\k = (k_n)_{n \geq 1}$ are zeros of $\psi(\cdot,q)$.
	Moreover, if $\|\r\|_{\ell^1} \to 0$, then we have $\|q-q_o\|_{\cP} \to 0$.
\end{theorem}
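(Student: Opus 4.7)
By Theorem~\ref{thm:inv_jost}(ii)--(iii), the map $q \mapsto \psi(\cdot,q)$ is a homeomorphism $\cP \to \cJ$ and $q$ is uniquely determined by the zeros of its Jost function. The task reduces to constructing $\psi \in \cJ$ with zero sequence $(k_n)_{n\ge1}$ and to showing $\rho_\cJ(\psi, \psi_o) \to 0$ as $\|\r\|_{\ell^1} \to 0$, where $\psi_o := \psi(\cdot,q_o)$; the claims about existence, uniqueness, and continuity of $q$ then transfer via the homeomorphism and its continuous inverse.

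The plan is to build $\psi$ by approximation using the finite-perturbation result of~\cite{KM20a}. For each $N \geq 1$, define the truncation $\r^{(N)}$ by $\r^{(N)}_n = \r_n$ for $n \leq N$ and $\r^{(N)}_n = 0$ for $n > N$. By the finite-perturbation result, there exists $\psi^{(N)} \in \cJ$ whose zero sequence is $(k_n^o + \r^{(N)}_n)_n$, and the map from finite perturbations to $\cJ$ is continuous. I would show that $\{\psi^{(N)}\}_{N\ge1}$ is Cauchy in $\cJ$ via a telescoping estimate
\[
    \rho_\cJ(\psi^{(N)}, \psi^{(M)}) \;\leq\; \sum_{n=N+1}^{M} \rho_\cJ(\psi^{(n)}, \psi^{(n-1)}) \;\leq\; C \sum_{n=N+1}^{M} |\r_n|,\qq M > N,
\]
in which the inner estimate treats $\psi^{(n)}$ as the single-zero perturbation of $\psi^{(n-1)}$ at $k_n^o$. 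Since $\r \in \ell^1$, the right-hand side vanishes as $N \to \iy$, so the limit $g := \lim_N \cF^{-1}(\psi^{(N)} - 1)$ exists in $L^2(0,\g)$, and I set $\psi := 1 + \cF g$.

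To verify $\psi \in \cJ$, I would first note that $\psi^{(N)} \to \psi$ locally uniformly on $\C$ (the $L^2(0,\g)$-convergence of $\cF^{-1}(\psi^{(N)}-1)$ yields locally uniform convergence of the Fourier transforms). Hurwitz's theorem applied on $\ol\C_+$ gives $\psi \neq 0$ there, and applied on compacts of $\C_-$ identifies the zero sequence of $\psi$ as $(k_n)_{n\ge1}$. The identity $\sup \supp g = \g$ follows by passing the indicator equality $h_{\psi^{(N)}}(-\pi/2) = 2\g$ to the limit, since each finite perturbation preserves this exponential type. Finally, the same telescoping estimate, applied from $N=0$ to $N=\iy$, gives $\rho_\cJ(\psi, \psi_o) \leq C \|\r\|_{\ell^1}$, which yields the continuity $\|q - q_o\|_\cP \to 0$ through Theorem~\ref{thm:inv_jost}(ii).

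The principal technical obstacle I foresee is obtaining a uniform constant $C = C(q_o)$ in the single-zero perturbation estimate $\rho_\cJ(\psi^{(n)}, \psi^{(n-1)}) \leq C|\r_n|$, independent of $n$. The finite-perturbation result of~\cite{KM20a} is likely stated with a constant that depends on the full configuration of zeros; re-examining its proof to extract $n$-uniformity requires controlling how multiplication by a single elementary factor of the form $(z - k_n)/(z - k_n^o)$ changes the $L^2(0,\g)$-norm of $\cF^{-1}(\psi - 1)$. A careful analysis must exploit both that $|k_n^o| \to \iy$ and the structural constraint $k_n^o + \r_n \in \C_-$ (which implicitly forces $|\r_n|$ to be small whenever $|\Im k_n^o|$ is small), using the Cartwright/Levin structure of the zero set of Jost functions in $\cJ$. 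The $\ell^1$ hypothesis on $\r$ is indispensable at this step: only absolute summability survives the telescoping summation, whereas weaker summability would not ensure Cauchyness of $\{\psi^{(N)}\}$ in $\cJ$.
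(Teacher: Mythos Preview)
Your plan is a genuinely different route from the paper's. The paper does not approximate: it works directly in the Banach algebra $\cW_+ = \C \oplus \hat\cL_+$, writes $\log(f/f_o) = \sum_{n} \log\bigl(1 + \r_n/(k_n^o - \cdot)\bigr)$, proves this series converges absolutely in $\hat\cL_+$ via the elementary computation $\bigl\|\r_n/(k_n^o - \cdot)\bigr\|_{\hat\cL_+} = |\r_n|\,|\Im k_n^o|^{-1/2}\bigl(1 + |\Im k_n^o|^{-1/2}\bigr)$ (Lemma~\ref{lm:wiener_estimate}), exponentiates back inside $\cW_+$, and then checks $f \in \cJ$ by combining Lindel\"of's theorem with Levinson's theorem on the zero counting function.

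Your telescoping scheme can be made to work, but it rests on the very same estimate. The uniform constant you seek in $\rho_\cJ(\psi^{(n)}, \psi^{(n-1)}) \leq C|\r_n|$ comes from Lemma~\ref{lm:wiener_estimate} together with Corollary~\ref{cor:im_part}: since $|\Im k_n^o| \to \iy$, the factor $|\Im k_n^o|^{-1/2}\bigl(1 + |\Im k_n^o|^{-1/2}\bigr)$ is bounded uniformly in $n$. Your speculation that the constraint $k_n \in \C_-$ forces $|\r_n|$ small when $|\Im k_n^o|$ is small is both wrong (it only bounds $\Im\r_n$ from above) and unnecessary. Note also that $C$ will depend on $\|\r\|_{\ell^1}$ and not just on $q_o$, because the step estimate picks up a factor $\|\psi^{(n-1)}\|_{\cW_+} \leq \|\psi_o\|_{\cW_+} \prod_{j<n}\bigl(1 + C_1|\r_j|\bigr)$; this is harmless for Cauchyness and for the continuity conclusion, but you should state it.

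There is a real gap in your verification that $\psi \in \cJ$: ``passing the indicator equality $h_{\psi^{(N)}}(-\pi/2) = 2\g$ to the limit'' does not work, since the indicator is only upper semicontinuous under locally uniform limits and the type could drop, costing you $\sup\supp g = \g$. The correct argument (the paper's) works on the limit function directly: Lemma~\ref{lm:count_func} gives $n(r,\psi) = \frac{2\g}{\pi}r + o(r)$, Levinson's theorem then forces $\t_+(\psi) + \t_-(\psi) = 2\g$, and Paley--Wiener gives $\t_+(\psi) = 0$. Once you have Lemma~\ref{lm:wiener_estimate} and this Levinson step in hand, the approximation layer buys nothing; the paper's direct argument is shorter and delivers the continuity estimate $\|F\|_{\hat\cL_+} \leq C\|\r\|_{\ell^1}$ in one stroke.
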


\begin{remark}
	1) Firstly, this theorem solve the global stability problem for resonances. 
	Namely, it shows that the set of resonances for $q \in \cP$ 
	is closed with respect to $\ell^1$ perturbations. Secondly, it shows
	that the potential depends continuously on such perturbations.

	2) In case of Schr{\"o}dinger operators on the half-line with compactly supported potentials, 
	the similar result was obtained by Korotyaev in \cite{K04b}. 
	It was shown in this paper that the space of resonances is closed under perturbations 
	$(\r_n)_{n \geq 1}$ such that $\sum_{n \geq 1} |\r_n|^2 n^{2\ve} < \iy$ for some $\ve > 1$.
\end{remark}

Albeit the methods of this paper are similar to those from \cite{K04b}, they need some 
adaptation. In particular, we used the methods from the Banach algebras theory.
This is due to the following differences between Dirac and Schr{\"o}dinger cases:

\begin{enumerate}[label={(\roman*)}]
    \item The resonances of Dirac operators are not symmetric with respect to the imaginary
    line.

    \item Roughly speaking, the spectral problem for Dirac operators
    corresponds to spectral problem for Schr{\"o}dinger operators with
    distributions.

    \item The second term in the asymptotic expansion of the Jost function of Dirac operators
    decrease more slowly as spectral parameter goes to infinity. Maybe it is the
    main point.
\end{enumerate}

The stability of inverse problem in terms of resonances for Schr{\"o}dinger operators 
on the half-line with compactly supported potentials was also considered by Marletta, 
Shterenberg and Weikard \cite{MSW10} in a different form. They showed that 
two potentials are close to each other if their resonances in the circle with a radius $R$ are close 
to each other. Namely, in this work, the norm 
$\sup_{x \in [0,\g]} \left| \int_{x}^{\g} (q(t) - \tilde{q}(t)) dt\right|$ was estimated
through $R$ and $\ve$, where $|z_n(q) - z_n(\tilde{q})| < \ve$ for any $n \geq 1$ such that
$|z_n(q)| < R$. Such results are possibly preferable for numerical applications since they answer on the
question how many resonances we need to know to recover the potential with a given accuracy.

An extension of this method for Schr{\"o}dinger operators 
on the real line with compactly supported potentials was obtained by Bledsoe in \cite{B12}. 
The inverse resonance problem in this case was studied by Korotyaev in \cite{K05}, where was shown
that in this case resonances does not uniquely determine a potential and then we need to add 
some additional data to obtain the uniqueness.

\subsection{Canonical systems}
It is well-known that the Dirac operators are associated with canonical systems (see, e.g., \cite{GK67}). 
We consider a canonical system given by
\[ \label{p1e7}
    Jy'(x,z) = z \gh(x) y(x,z),\qq (x,z) \in \R_+ \ts \C,\qq J = \ma 0 & 1 \\ -1 & 0 \am,
\]
with the Dirichlet boundary condition
\[ \label{p1e8}
	y_1(0,z) = 0,\qq y(x,z) = \ma y_1 \\ y_2 \am(x,z),
\]
where $\gh: \R_+ \to \cM^+_2(\R)$ is a Hamiltonian and by $\cM^+_2(\R)$, we denote the set of
$2 \times 2$ positive-definite self-adjoint matrices with real entries.
Now, we introduce the class of Hamiltonians associated with the Dirac operators.
By $\cM_2(\R)$ we denote the set of $2 \times 2$ matrices with real entries.
\begin{definition*}
    $\cG = \cG_{\g}$ is the set of functions $\gh:\R_+ \to \cM^+_2(\R)$ such that
    $$
        \gh' \in L^2(\R_+,\cM_2(\R)),\qq \sup \supp \gh' = \g,\qq \gh(0) = I_2
    $$
    and $\gh$ has the following form
    \[ \label{p1e12}
        \gh = \ma \ga & \gb \\ \gb & \frac{1+\gb^2}{\ga} \am,
    \]
    where $\ga : \R_+ \to \R_+$ and $\gb : \R_+ \to \R$.
\end{definition*}
\begin{remark}
    If $\gh \in \cG$, then it follows from (\ref{p1e12}) that
    $$
        \gh^*(x) = \gh(x),\qq \det \gh(x) = 1,\qq \ga(x) > 0,\qq \forall x \in \R_+
    $$
    and $\gh(x)$ is a constant matrix for any $x \geq \g$.
\end{remark}
The canonical system (\ref{p1e7}), (\ref{p1e8}) with the Hamiltonian $\gh \in \cG$ 
corresponds to an self-adjoint operator
$$
	\cK_{\gh} = \gh^{-1} J \frac{d}{dx}
$$
in the weighted Hilbert space $L^2(\R_+, \C^2, \cH)$ equipped with the norm
$$
    \|f\|^2_{L^2(\R_+, \C^2, \gh)} = \int_{\R_+} (\gh(x) f(x), f(x)) dx,\qq f \in L^2(\R_+, \C^2, \gh),
$$
where $(\cdot,\cdot)$ is the standard scalar product in $\C^2$ (see, e.g., \cite{R14}).
Moreover, we need the following result from \cite{KM20a} (see also \cite{KM20b}).
\begin{theorem} \label{thm3}
	For any $q \in \cP$ there exists a unique $\gh \in \cG$ and for any $\gh \in \cG$
	there exists a unique $q \in \cP$ such that the operators $H(q)$ and $\cK(\gh)$ are
	unitary equivalent.
\end{theorem}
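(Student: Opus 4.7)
My plan is to reduce the theorem to a construction in the Jost-function space $\cJ$ and then invoke Theorem~\ref{thm:inv_jost}. Since (ii) of that theorem gives a homeomorphism $q\mapsto\psi(\cdot,q)$ between $\cP$ and $\cJ$, and (iii) shows that any $\psi\in\cJ$ is determined by its zeros, uniqueness of $q$ is automatic, and both existence and continuity reduce to producing a function $\psi\in\cJ$ with zero set $(k_n)_{n\ge1}$ depending $\ell^1$-continuously on $\r$. My candidate is the Blaschke-type factorization
$$
\psi(z):=\psi_o(z)\prod_{n\ge1}\frac{z-k_n}{z-k_n^o}=\psi_o(z)\prod_{n\ge1}\left(1-\frac{\r_n}{z-k_n^o}\right).
$$
A standard splitting using $\r\in\ell^1$, $|k_n^o|\to\infty$, and $|k_n^o|\ge|k_1^o|>0$ shows that the product converges locally uniformly on $\C\sm\{k_n^o\}_{n\ge1}$ to a meromorphic function whose poles at $k_n^o$ are cancelled by the zeros of $\psi_o$; thus $\psi$ is entire with zero set $(k_n)_{n\ge1}$, and since all zeros and poles of the ratio product lie in $\C_-$, $\psi$ is non-vanishing on $\ol\C_+$.

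The core of the proof is verifying $\psi-1=\cF g$ for some $g\in\cP$, i.e., $g\in L^2(0,\g)$ with $\sup\supp g=\g$. By Paley--Wiener this amounts to three conditions on $\psi-1$: being entire of exponential type $2\g$; lying in $L^2(\R)$; and extending to an $H^2(\C_+)$-function. I would adopt the Banach-algebra viewpoint signalled in the introduction, working in the half-line Wiener algebra
$$
\mathcal W:=\{\,c+\cF h:c\in\C,\,h\in L^1(\R_+)\,\},\qquad \|c+\cF h\|_{\mathcal W}=|c|+\|h\|_{L^1},
$$
whose Gelfand spectrum is $\ol\C_+\cup\{\infty\}$. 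Then $\cJ\subset\mathcal W$, and the Wiener--L\'evy theorem yields $\psi_o^{-1}\in\mathcal W$ since $\psi_o\neq0$ on $\ol\C_+$ and $\psi_o\to1$ at infinity. Writing $\psi-1=(\psi_o-1)+\psi_o(B-1)$ and telescoping
$$
B(z)-1=-\sum_{n\ge1}\frac{\r_n}{z-k_n^o}\prod_{m<n}\frac{z-k_m}{z-k_m^o},
$$
each term $\psi_o(z)\cdot\r_n/(z-k_n^o)$ is an \emph{entire} function (since $\psi_o$ vanishes at $k_n^o$) of exponential type $\le2\g$ lying in $H^2(\C_+)\cap L^2(\R)$, hence in $\cF(L^2(0,\g))$ by Paley--Wiener. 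A Banach-algebra bound on the tail products, uniform in $n$ by $\ell^1$-summability, allows the full sum to be controlled in the $\cJ$-norm, yielding $\|\psi-\psi_o\|_{\cJ}\le C\|\r\|_{\ell^1}$. The equality $\sup\supp g=\g$ then follows because $B$ has exponential type $0$, so $\psi$ inherits the exact type $2\g$ from $\psi_o$.

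With the estimate $\|\psi-\psi_o\|_{\cJ}\le C\|\r\|_{\ell^1}$ in hand, the continuity of the inverse map in Theorem~\ref{thm:inv_jost}(ii) immediately delivers $\|q-q_o\|_{\cP}\to0$ as $\|\r\|_{\ell^1}\to0$. The main obstacle I expect is precisely the Banach-algebra estimate on the terms $\psi_o(z)/(z-k_n^o)$ and on the tail products: the naive $H^2(\C_+)$- and $\mathcal W$-norms of $1/(z-k_n^o)$ are of order $|\Im k_n^o|^{-1/2}$ and $|\Im k_n^o|^{-1}$ respectively, which degrade as Dirac resonances approach the real axis (a difficulty absent in the Schr\"odinger case, per items (i) and (iii) of the introduction). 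Overcoming this requires exploiting the cancellation coming from the vanishing of $\psi_o$ at $k_n^o$, for which the Banach-algebra (rather than pointwise or Hardy) structure plays an essential role, following the adaptation of Korotyaev's method in \cite{K04b}.
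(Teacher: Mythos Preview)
Your proposal does not address the stated theorem at all. Theorem~\ref{thm3} asserts a bijective correspondence between potentials $q\in\cP$ and Hamiltonians $\gh\in\cG$ via unitary equivalence of the operators $H(q)$ and $\cK(\gh)$; there is no sequence of resonances $(k_n^o)$, no perturbation $\r\in\ell^1$, and no Jost function to be constructed here. What you have written is a proof sketch for the main stability result, Theorem~\ref{thm:main_thm} (and implicitly Theorem~\ref{thm:jost_pert}), not for Theorem~\ref{thm3}.

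Moreover, the paper does not prove Theorem~\ref{thm3}: it is quoted from \cite{KM20a} as background needed to relate the Dirac setting to canonical systems and Hermite--Biehler functions. A proof would involve writing the Dirac equation as a canonical system via a gauge transformation (e.g., passing to the transfer matrix and extracting a real symmetric Hamiltonian with $\det\gh=1$), and conversely reconstructing $q$ from $\gh$; none of the Blaschke-product or Banach-algebra machinery you invoke is relevant to that argument.

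If your intention was to prove Theorem~\ref{thm:main_thm}, then your outline is in the right spirit and close to the paper's approach, but with one notable difference: the paper works with $\log(\psi/\psi_o)=\sum_n\log\bigl(1+\r_n/(k_n^o-k)\bigr)$ and estimates each logarithmic term in the algebra $\hat\cL_+$ using Lemmas~\ref{lm:wiener_estimate} and~\ref{lm:log_estimate}, rather than your telescoping sum for $B-1$. The logarithmic route cleanly separates the terms and avoids having to control the partial products $\prod_{m<n}(z-k_m)/(z-k_m^o)$ uniformly in $n$, which is the step you flagged as the main obstacle. Also, your concern that $|\Im k_n^o|^{-1/2}$ ``degrades as Dirac resonances approach the real axis'' is misplaced: by Corollary~\ref{cor:im_part} one has $|\Im k_n^o|\to\infty$, so these factors are in fact uniformly bounded, and no cancellation from the vanishing of $\psi_o$ at $k_n^o$ is needed for the norm estimate.
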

Recall well-known result about inverse problem for the canonical system (\ref{p1e7}), (\ref{p1e8})
in terms of de Branges space (see Theorem 40 in \cite{dB} or Theorems 10, 13 in \cite{R14}).
For any canonical system there exists a Hermite-Biehler function $E$ such that $E$ is entire and
$|E(z)| > |E(\ol z)|$ for each $z \in \C_+$ and the associated de Branges space $B(E)$ is given by
$$
    B(E) = \Big\{ F:\C \to \C \, \mid \, \text{$F$ is entire},\, \frac{F}{E},\,
    \frac{F^{\#}}{E} \in \mH^2(\C_+)\, \Big\},
$$
where $F^{\#}(z) = \ol{F(\ol z)}$ and $\mH^2(\C_+)$ is the Hardy space in the upper half-plane.
Moreover, from a de Branges space, one can recover the associated canonical system.

We say that a Hermite-Biehler function is \textit{Dirac-type} if
it is associated with the canonical system with the Hamiltonian $\gh \in \cG$.
It follows from Theorem \ref{thm3} that there exists a correspondence between 
Dirac-type Hermite-Biehler functions and Jost functions. We need the following result
from \cite{KM20a}.
\begin{theorem} \label{thm4}
    A Hermite-Biehler function $E$ is Dirac-type if and only if
    \[ \label{p1e10}
        E(k) = -ie^{-i\g k} \psi(k),\qq k \in \C,
    \]
    for some $\psi \in \cJ$.
\end{theorem}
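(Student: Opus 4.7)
The plan is to combine Theorem \ref{thm3} and Theorem \ref{thm:inv_jost}(ii) to establish the correspondence at the level of objects, then pin down the explicit prefactor $-ie^{-i\g k}$ in \er{p1e10} through a direct comparison of fundamental solutions. By Theorem \ref{thm3} the classes $\cP$ and $\cG$ are in bijection, and by Theorem \ref{thm:inv_jost}(ii) so are $\cP$ and $\cJ$; moreover, every Hermite--Biehler function of Dirac type comes, by definition, from some $\gh \in \cG$. The theorem is therefore equivalent to showing that, for matched $q \leftrightarrow \gh$ under Theorem \ref{thm3}, the Hermite--Biehler function $E$ associated with $\gh$ satisfies $E(k) = -ie^{-i\g k}\p(k,q)$, with $\p(\cdot,q) \in \cJ$.

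To produce this identity, I would realize both $E$ and $\p$ from the same transfer matrix. Let $M(x,z)$ solve $JM'(x,z) = z \gh(x) M(x,z)$ with $M(0,z) = I_2$. For a canonical system on $[0,\g]$ with the Dirichlet condition \er{p1e8} and the Hamiltonian constant past $\g$, the standard construction of the Hermite--Biehler function (see, e.g., \cite{R14}) yields $E(z)$ as a specific linear combination of the entries of $M(\g,z)$. On the Dirac side, the unitary equivalence in Theorem \ref{thm3} is implemented by a pointwise gauge transformation $f(x,z) = T(x) M(x,z) T(0)^{-1}$ (up to $z$-independent normalizations) between the Jost solution $f$ and the canonical fundamental matrix $M$; writing out this transformation explicitly using the block structure of $\gh$ in \er{p1e12} is the technical heart of the argument. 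Evaluating at $x = \g$, where $f(\g,z) = e^{iz\g\s_3}$, then using the boundary data at $x=0$ and the definition $\p(z) = f_{11}(0,z) - f_{21}(0,z)$, the combination producing $E$ collapses to $-ie^{-i\g z}\p(z)$.

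Once \er{p1e10} is established, the Hermite--Biehler condition $|E(z)| > |E(\overline{z})|$ on $\C_+$ is an easy consequence of the properties packaged into $\cJ$: the ratio factors as
\[
\frac{|E(z)|}{|E(\overline{z})|} = e^{2\g \Im z}\, \frac{|\p(z)|}{|\p(\overline{z})|},\qq z \in \C_+,
\]
and since $\p$ has no zeros on $\ol\C_+$ while $\p(z)\to 1$ as $\Im z \to +\iy$ along vertical lines (by \er{p2e1}), the exponential factor dominates. Conversely, given a Dirac-type $E$, one extracts $\gh \in \cG$ from $E$ via the de Branges correspondence, produces $q \in \cP$ from Theorem \ref{thm3}, and the same identity \er{p1e10} then expresses $E$ in terms of $\p(\cdot,q) \in \cJ$.

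The main obstacle I expect is the bookkeeping in the second paragraph: correctly identifying the gauge $T(x)$ that intertwines the Dirac equation $f' = Qf + iz\s_3 f$ with the canonical form $JM' = z\gh M$, and tracking how the boundary data at $x=0$ and the Jost normalization at $x=\g$ transform, so that the scalar factor $-i$ and the exponential $e^{-i\g k}$ come out correctly. Everything else is a reassembly of the previously cited results.
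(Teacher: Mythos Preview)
The paper does not prove Theorem~\ref{thm4}; it is quoted from \cite{KM20a} (``We need the following result from \cite{KM20a}''), so there is no in-paper proof to compare your proposal against.

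That said, your outline is a reasonable reconstruction of how such a result is typically obtained, and the overall architecture (pass through Theorems~\ref{thm3} and \ref{thm:inv_jost}(ii), then identify the scalar prefactor via the transfer matrix) is sound. Two comments on the sketch itself. First, the core computation you flag as the ``main obstacle'' is indeed the entire content: until the gauge $T(x)$ and the resulting identity between $M(\g,z)$-entries and $f_{ij}(0,z)$ are written down, nothing has been proved, and this is precisely the part done in \cite{KM20a}. Second, your verification that $E$ is Hermite--Biehler is not quite an argument as written: ``the exponential factor dominates'' is not a proof that $|E(z)|>|E(\ol z)|$ on $\C_+$. A clean way is to note that $\p^{\#}(z)=\ol{\p(\ol z)}=1+\int_0^\g \ol{g(s)}e^{2izs}\,ds$ lies in the same algebra as $\p$, hence $\p^{\#}/\p$ is bounded analytic on $\C_+$ with unit modulus on $\R$; the maximum principle gives $|\p^{\#}/\p|\le 1$ on $\C_+$, whence $|E^{\#}(z)/E(z)|=e^{-2\g\Im z}\,|\p^{\#}(z)/\p(z)|<1$ for $\Im z>0$.
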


\begin{remark}
	Recall that Jost function is uniquely determined by its zeros. Thus, it follows from (\ref{p1e10})
	that Dirac-type Hermite-Biehler function is also uniquely determined by its zeros. Moreover,
	other properties of zeros of Jost functions hold true for Dirac-type Hermite-Biehler functions,
	see details in \cite{KM20a}.
\end{remark}
Now, using this correspondence, we show that the zeros of the Dirac-type Hermite-Biehler function
can be perturbed by $\ell^1$ sequence and the function depends continuously on this perturbation. 
\begin{theorem} \label{thm:herm_pert}
	Let $E_o$ be a Dirac-type Hermite-Biehler function with zeros 
	$\k^o = (k_n^o)_{n \geq 1}$ arranged that $0 < |k^o_1| \leq |k^0_2| \leq \ldots$ and 
	let $\r = (\r_n)_{n \geq 1} \in \ell^1$ be such that $k_n = k_n^o + \r_n \in \C_-$ 
	for each $n \geq 1$. Then there exists a unique Dirac-type Hermite-Biehler function $E$ 
	such that $\k = (k_n)_{n \geq 1}$ are zeros of $E$. 
	Moreover, if $\|\r\|_{\ell^1} \to 0$, then we have $\|\cF^{-1}(E - E_o)\|_{L^2(-\frac{\g}{2},\frac{\g}{2})} \to 0$.
\end{theorem}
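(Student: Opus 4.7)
The plan is to reduce Theorem \ref{thm:herm_pert} to Theorem \ref{thm:main_thm} via the one-to-one correspondence provided by Theorem \ref{thm4}, and then to translate the $L^2$ continuity on $\cP$ into the required $L^2$ continuity on the Fourier side for $E - E_o$ by a simple shift.

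First I would invoke Theorem \ref{thm4} to write $E_o(k) = -ie^{-i\g k}\p_o(k)$ for a unique $\p_o \in \cJ$, and note that since the exponential factor is zero-free, the zeros $\k^o$ of $E_o$ coincide with the zeros of $\p_o$. By Theorems \ref{thm:inv_jost}(ii)--(iii) these zeros are precisely the resonances of a unique potential $q_o \in \cP$. Applying Theorem \ref{thm:main_thm} to $\k^o$ and the perturbation $\r \in \ell^1$ yields a unique $q \in \cP$ whose resonances form the perturbed sequence $\k = (k_n)_{n \geq 1}$; let $\p = \p(\cdot,q) \in \cJ$ be the corresponding Jost function. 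Then $E(k) := -ie^{-i\g k}\p(k)$ is, by Theorem \ref{thm4}, a Dirac-type Hermite-Biehler function whose zero set is $\k$. Uniqueness of $E$ follows from Theorem \ref{thm4} combined with the uniqueness in Theorem \ref{thm:inv_jost}(iii): any other candidate $\wt E$ would give a $\wt\p \in \cJ$ with the same zeros as $\p$, forcing $\wt\p = \p$ and hence $\wt E = E$.

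For continuity, Theorem \ref{thm:main_thm} gives $\|q - q_o\|_{\cP} \to 0$ as $\|\r\|_{\ell^1} \to 0$, and Theorem \ref{thm:inv_jost}(ii) then yields $\r_{\cJ}(\p,\p_o) = \|\cF^{-1}(\p-\p_o)\|_{L^2(0,\g)} \to 0$. Writing $h = \p - \p_o$ and $g = \cF^{-1} h$, the definition of $\cJ$ forces $g$ to be supported in $[0,\g]$. Since $E - E_o = -ie^{-i\g k} h$, a direct computation from the definition of $\cF^{-1}$ gives
\[
	\cF^{-1}(E - E_o)(s) \;=\; -\tfrac{i}{\pi}\int_{\R} e^{-i\g k} h(k)\, e^{-2iks}\, dk \;=\; -i\, g\!\lt(s + \tfrac{\g}{2}\rt),
\]
so $\cF^{-1}(E - E_o)$ is supported in $[-\tfrac{\g}{2},\tfrac{\g}{2}]$ and its $L^2$ norm on that interval equals $\|g\|_{L^2(0,\g)} = \r_{\cJ}(\p,\p_o)$. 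The desired continuity $\|\cF^{-1}(E - E_o)\|_{L^2(-\g/2,\g/2)} \to 0$ then follows at once.

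In this scheme there is no serious obstacle: the existence and uniqueness parts are bookkeeping on top of Theorems \ref{thm:main_thm} and \ref{thm4}, and the only calculation of substance is the Fourier shift identity above, which converts the unilateral support $[0,\g]$ natural for $\cP$ into the symmetric interval $[-\tfrac{\g}{2},\tfrac{\g}{2}]$ natural for Hermite-Biehler functions. Thus the theorem is essentially a corollary of Theorem \ref{thm:main_thm}, and all the difficulty is concentrated in the latter.
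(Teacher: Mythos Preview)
Your proposal is correct and follows essentially the same route as the paper: pass from $E_o$ to $\psi_o\in\cJ$ via Theorem~\ref{thm4}, perturb the zeros to obtain a new $\psi\in\cJ$, return to $E$ via Theorem~\ref{thm4}, and then identify $\|\cF^{-1}(E-E_o)\|_{L^2(-\g/2,\g/2)}$ with $\rho_{\cJ}(\psi,\psi_o)$ by the Fourier shift. The only difference is that the paper perturbs directly at the level of Jost functions using Theorem~\ref{thm:jost_pert}, whereas you take an extra round trip through potentials via Theorem~\ref{thm:main_thm} and the homeomorphism of Theorem~\ref{thm:inv_jost}(ii); since Theorem~\ref{thm:main_thm} is itself proved from Theorem~\ref{thm:jost_pert} composed with that homeomorphism, your detour is harmless but redundant.
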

\begin{remark}
	Note that it follows from the Plancherel theorem (see, e.g., Theorem IX.6 in \cite{RS80}) that
	$\|\cF^{-1}(E - E_o)\|_{L^2(-\frac{\g}{2},\frac{\g}{2})} \to 0$ if and only if
	$\|E - E_o\|_{L^2(\R)} \to 0$
\end{remark}

\subsection{Literature survey}
We shortly discuss the known results about resonances.
Resonances are considered in the different settings,
see articles \cite{F97, H99, K04a, S00, Z87} and the book
 \cite{DZ19} and the references therein. Recall that the inverse resonance problem for
Schr{\"o}dinger operators with compactly supported potentials was
solved by Korotyaev in \cite{K05} for the case of the real line and in
\cite{K04a} for the case of the half-line, 
see also Zworski \cite{Z02} and Brown, Knowles and Weikard
\cite{BKW03} concerning the uniqueness.
Moreover, there are other results about
perturbations of the following model (unperturbed) potentials by
compactly supported potentials: step potentials was considered by Christiansen \cite{C06}, 
periodic and linear potentials was considered by Korotyaev in \cite{K11h} and \cite{K17}.
Note also that Schr{\"o}dinger operators with linear potentials are one-dimensional Stark operators 
and in \cite{K17} the inverse resonance problem for Stark operators perturbed by compactly
supported potentials was solved.
The asymptotics of the counting function of resonances for Schr{\"o}dinger
operators on the real line with compactly supported potentials 
was first obtained by Zworski in \cite{Z87}. The 
results about the Carleson measures for resonances
were obtained by Korotyaev in \cite{K16}.

Global estimates of resonances for the massless Dirac operators on the real line were
obtained by Korotyaev in \cite{K14}. Resonances for Dirac operators was also
studied by Iantchenko and Korotyaev in \cite{IK14b} for the massive Dirac operators on the
half-line and in \cite{IK14a} for the massless Dirac operators on
the real line under the condition $q' \in L^1(\R)$.
In \cite{IK15}, Iantchenko and Korotyaev considered the radial Dirac operator.
The inverse resonance problem for the massless Dirac operators with 
compactly supported potentials was solved by Korotyaev and Mokeev in \cite{KM20a} 
for the case of the half-line and in \cite{KM20b} for the case of the real line.
There is a number of papers dealing with other related problems for the
one-dimensional Dirac operators, for instance, the resonances for
Dirac fields in black holes was described, see, e.g.,, Iantchenko \cite{I18}.

As we have showed above, Dirac operators can be rewritten as canonical systems and
for these systems, the inverse problem
can be solved in terms of de Branges spaces,
which can be parametrized by the Hermite-Biehler functions (see \cite{dB, R14}).
There exist many papers devoted to de Branges spaces and canonical
systems. In particular, they are used in the inverse spectral theory
of Schr{\"o}dinger and Dirac operators (see, e.g., \cite{R02}). In our paper we
have used the connection between Jost and Hermite-Biehler functions. 
Similar connection in case of the Schr{\"o}dinger operators was given
by Baranov, Belov and Poltoratski in \cite{BBP} (see also Makarov and Poltoratski \cite{P}).
In \cite{KM20a,KM20b}, the canonical systems associated with Dirac operators on the half-line
and on the real line was considered. In particular, it was shown how to recover the
potential of the Dirac operator by the Hamiltonian of the unitary equivalent canonical system.

\section{Preliminary}

Before we prove the main theorem, we recall some well-known facts about entire functions and 
Banach algebras, prove several technical lemmas and recall properties of the resonances of 
the operator $H$.

\subsection{Entire functions}
Recall that an entire function $f(k)$ is said to be of
\textit{exponential type} if there exist constants $\t,C > 0$ such that $|f(k)| \leq C e^{\t |k|}$,
$k \in \C$. We introduce the Cartwright classes of entire functions $\cE_{Cart}(\a,\b)$ by
\begin{definition*}
    For any $\a,\b \in \R$, $\cE_{Cart}(\a,\b)$ is a class of entire functions of exponential type $f$ such that
    $$
        \int_{\R} \frac{\log(1+|f(k)|)dk}{1 + k^2} < \iy,\qq \t_+(f) = \a,\qq \t_-(f) = \b,
    $$
	where $\displaystyle \t_{\pm}(f) = \lim \sup_{r \to +\iy} \frac{\log |f(\pm i r)|}{r}$. Let also
	$\cE_{Cart} = \cE_{Cart}(0,2\g)$.
\end{definition*}
If $f \in \cE_{Cart}(\a,\b)$ for some $\a,\b \in \R$, then it has the Hadamard factorization 
(see, e.g., pp.127-130 in \cite{L96}).
Let $p \geq 0$ be the multiplicity of zero $k = 0$ of $f$.
We denote by $(k_n)_{n \geq 1}$ zeros of $f$ in $\C \sm \{ 0 \}$
counted with multiplicity and arranged that $0 < |k_1| \leq |k_2| \leq \ldots$. 
Then $f$ has the Hadamard factorization
\[ \label{p2e11}
    f(k) = C k^p e^{i \varkappa k} \lim_{r \to +\iy}
        \prod_{|k_n| \leq r} \left(1 - \frac{k}{k_n}\right),\qq k \in \C,
\]
where the product converges uniformly
on compact subsets of $\C$ and
\[ \label{p2e12}
    \varkappa = \frac{\b - \a}{2},\qq C = \frac{f^{(p)}(0)}{p!},\qq
	\sum_{n \geq 1} \frac{|\Im k_n|}{|k_n|^2} < +\iy,\qq
	\exists \lim_{r \to +\iy} \sum_{|k_n| \leq r} \frac{1}{k_n} \neq \iy.
\]
For functions from $\cE_{Cart}(\a,\b)$ there is the Levinson's theorem about distribution of their zeros
(see, e.g.,  p. 58 in \cite{Koo98}). For any $f \in \cE_{Cart}(\a,\b)$, $\a,\b \in \R$, we introduce
the counting function of its zeros $n(r,f)$ into a circle of radius $r \geq 0$ by
$$
	n(r,f) = \# \{ \, k \in \C \, \mid \,f(k) = 0,\, |k| \leq r \,\}.
$$
\begin{theorem}[Levinson] \label{thm:lev}
    Let $f \in \cE_{Cart}(\a,\b)$ for some $\a,\b \in \R$. Then we have
    \[ \label{p2e13}
        n(r,f) = \frac{\a + \b}{\pi} r + o(r)
    \]
    as $r \to +\iy$.
\end{theorem}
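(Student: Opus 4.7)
The plan is to apply Jensen's formula and then pass from the integrated counting function $N(r) := \int_0^r t^{-1} n(t,f)\, dt$ to $n(r,f)$ itself by a Tauberian argument based on monotonicity of $n(\cdot,f)$. First, by factoring out $k^p$ in the Hadamard representation (\ref{p2e11}) I may assume $f(0) \neq 0$; this changes $n(r,f)$ by the constant $p$ and does not affect the leading asymptotic. Jensen's formula on the disk of radius $r$ then yields
\[
N(r) = \frac{1}{2\pi} \int_0^{2\pi} \log|f(re^{i\theta})|\, d\theta - \log|f(0)|,
\]
so the question reduces to the asymptotic of the mean value $M(r) := \frac{1}{2\pi}\int_0^{2\pi} \log|f(re^{i\theta})|\, d\theta$.

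Next I would analyse $M(r)$ by exploiting the two defining Cartwright hypotheses: the exponential type along the imaginary axis, $\t_+(f) = \a$ and $\t_-(f) = \b$, together with the logarithmic integrability $\int_{\R} (1+k^2)^{-1} \log(1+|f(k)|)\, dk < \iy$ on $\R$. By the Phragm\'{e}n-Lindel\"{o}f principle the first condition forces the growth indicator $h(\theta) := \lim \sup_{r \to +\iy} r^{-1} \log|f(re^{i\theta})|$ to equal $\a \sin\theta$ on $(0,\pi)$ and $-\b \sin\theta$ on $(-\pi,0)$, and the second promotes this lim sup to a uniform $o(r)$ estimate on any closed arc avoiding the real axis: fixing $\d > 0$, one has $\log|f(re^{i\theta})| = r h(\theta) + o(r)$ uniformly on $\{\theta : \dist(\theta,\{0,\pi\}) \geq \d\}$. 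Integrating $h$ over the full circle gives $\frac{1}{2\pi}\int_0^{2\pi} h(\theta)\, d\theta = \frac{\a+\b}{\pi}$, while the contribution from the $\d$-neighbourhoods of $\{0,\pi\}$ is bounded, using the Poisson-Jensen representation in the upper and lower half-planes together with the integrability of $\log^+|f|$ on $\R$, by a quantity of order $O(\d r)$ uniformly in $r$. Letting $r \to \iy$ first and then $\d \to 0$ produces $M(r) = \frac{\a+\b}{\pi}\, r + o(r)$, and hence also $N(r) = \frac{\a+\b}{\pi}\, r + o(r)$.

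Finally, the passage from $N(r)$ to $n(r,f)$ is a routine Tauberian step. Monotonicity of $n(\cdot,f)$ gives, for any $\ve > 0$,
\[
n(r,f)\log(1+\ve) \leq \int_r^{(1+\ve)r} \frac{n(t,f)}{t}\, dt = N((1+\ve)r) - N(r) = \frac{(\a+\b)\ve}{\pi}\, r + o(r),
\]
so $\lim \sup_{r \to \iy} r^{-1} n(r,f) \leq \frac{\a+\b}{\pi}\cdot \frac{\ve}{\log(1+\ve)}$; letting $\ve \to 0^+$ yields the desired upper bound, and the symmetric inequality on $[(1-\ve)r, r]$ gives the matching lower bound. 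The principal technical obstacle is the uniform control of $M(r)$ in neighbourhoods of $\theta = 0, \pi$: the Cartwright integrability condition is exactly what makes this possible, but packaging it cleanly requires either Carleman's formula in a half-plane or the canonical decomposition of Cartwright-class functions into a Blaschke product, a pure exponential and an outer factor. Once this input is in hand, the Levinson asymptotic (\ref{p2e13}) follows rapidly.
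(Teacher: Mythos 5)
The paper does not prove this statement at all: Theorem \ref{thm:lev} is quoted as a classical result with a pointer to Koosis (p.~58 in \cite{Koo98}), so there is no in-paper argument to compare yours against. Your sketch is the standard Cartwright--Levinson proof (Jensen's formula to get the integrated counting function $N(r)$, the indicator function to evaluate the circular mean of $\log|f|$, and a monotonicity Tauberian step to pass from $N(r)$ to $n(r,f)$), and the Tauberian step and the computation $\frac{1}{2\pi}\int_0^{2\pi}h(\theta)\,d\theta=\frac{\a+\b}{\pi}$ are correct as written.

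Two cautions on the middle step, which is where the entire content of the theorem lives. First, the claim that $\log|f(re^{i\theta})|=rh(\theta)+o(r)$ \emph{uniformly} on closed arcs avoiding $\theta=0,\pi$ is too strong: a Cartwright-class function may still have zeros in such an angle (the condition $\sum|\Im k_n|/|k_n|^2<\iy$ does not exclude them), and near those zeros $\log|f|$ is unbounded below; the correct statement is convergence in $L^1(d\theta)$, or uniform convergence off an exceptional set of disks of zero relative density. Since you only integrate in $\theta$, the $L^1$ version suffices, but it should be invoked as such. Second, the assertion that the contribution of the $\d$-neighbourhoods of $\{0,\pi\}$ is $O(\d r)$ is immediate only for $\log^+|f|$ (via Phragm\'en--Lindel\"of); controlling $\int\log^-|f(re^{i\theta})|\,d\theta$ there is the genuinely hard part and cannot be waved through --- it is exactly what Carleman's formula in the half-planes (or the Blaschke--exponential--outer factorization) is needed for, as you note at the end. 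So the proposal is a correct roadmap, but its central analytic lemma is deferred to the same sources the paper already cites rather than proved.
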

We also need the following Lindel{\"o}f's theorem (see, e.g., p. 21 in \cite{Koo98}).
\begin{theorem}[Lindel{\"o}f] \label{thm:lind}
	Let $(k_n)_{n \geq 1}$ be arranged that $0 < |k_1| \leq |k_2| \leq |k_3| \leq \ldots$ and let
	$$
		n(r) = \# \{\, k_m,\, m \geq 1\,  \mid \,  |k_m| \leq r \,\}.
	$$
	Suppose that $n(r) \leq Kr$ for some $K \geq 0$ and for any $r \geq 0$ and suppose that
	$$
		\Bigl| \sum_{|k_n| \leq r} \frac{1}{k_n} \Bigr|
	$$
	remain bounded as $r \to \iy$. Then the product
	$$
		C(k) = \lim_{r \to +\iy} \prod_{|k_n| \leq r} \left(1 -\frac{k}{k_n}\right),\qq k \in \C,
	$$
	is an entire function of exponential type.
\end{theorem}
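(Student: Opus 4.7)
The plan is to reduce the conclusion to the classical Hadamard genus-one factorization, by inserting and then compensating for Weierstrass convergence factors, and then to bound the exponential type via a near/far dichotomy. First, the density bound $n(r) \le Kr$ immediately yields $\sum_n |k_n|^{-2} < +\iy$ by a Stieltjes integration by parts,
$$
    \sum_{n \ge 1} \frac{1}{|k_n|^2} = 2 \int_{|k_1|}^{\iy} \frac{n(t)}{t^3}\,dt \le \frac{2K}{|k_1|},
$$
so the genus-one canonical product
$$
    P(k) = \prod_{n \ge 1} E_1(k/k_n),\qq E_1(w) := (1 - w) e^{w},
$$
converges uniformly on compact subsets of $\C$ to an entire function. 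For every $r > 0$ one has the identity
$$
    \prod_{|k_n| \le r}\Bigl(1 - \frac{k}{k_n}\Bigr) = e^{-k\,\Sigma(r)} \prod_{|k_n| \le r} E_1(k/k_n),\qq \Sigma(r) := \sum_{|k_n| \le r} \frac{1}{k_n},
$$
and since $\Sigma(r)$ is bounded by hypothesis, a Bolzano--Weierstrass extraction gives a subsequence $r_j \to +\iy$ with $\Sigma(r_j) \to \Sigma^\ast \in \C$; along this subsequence the left-hand side converges locally uniformly to the entire function $C(k) = e^{-k \Sigma^\ast} P(k)$.

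Next, I would prove $|C(k)| \le C' e^{c|k|}$ for all $k \in \C$, which is the exponential-type conclusion. Since $e^{-k \Sigma^\ast}$ is manifestly of exponential type $|\Sigma^\ast|$, it suffices to bound $P$. Writing $R = |k|$ and splitting zeros into \emph{far} ($|k_n| > 2R$) and \emph{near} ($|k_n| \le 2R$), the bound $|\log E_1(w)| \le C_0 |w|^2$ for $|w| \le 1/2$ combined with $\sum_{|k_n|>2R} |k_n|^{-2} \le K/R$ (again by integration by parts) gives
$$
    \sum_{|k_n| > 2R} \log |E_1(k/k_n)| \le C_0 R^2 \cdot \frac{K}{R} = C_0 K R.
$$
For the near part, decompose $\log|E_1(k/k_n)| = \log|1 - k/k_n| + \Re(k/k_n)$: the correction terms sum to $\Re(k\,\Sigma(2R))$, bounded by $\sup_r |\Sigma(r)| \cdot R$, and for the logarithmic piece the trivial majorization $\log|1-k/k_n| \le \log(1 + R/|k_n|)$ combined with a further Stieltjes integration by parts,
$$
    \sum_{|k_n| \le 2R} \log(1 + R/|k_n|) = n(2R)\log\tfrac{3}{2} + R \int_{0}^{2R} \frac{n(t)\,dt}{t(t+R)} \le \bigl(2K \log\tfrac{3}{2} + K \log 3\bigr)\,R,
$$
uses $n(t) \le Kt$. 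Adding the three contributions, $\log|P(k)| \le C_1 R$ for all $k$, so $P$ and hence $C$ is of exponential type.

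The main obstacle is the near part of the estimate, where both hypotheses are used essentially and simultaneously. The density condition $n(r) \le Kr$ drives the two Stieltjes integrations by parts, while the boundedness of $\Sigma(r)$ is precisely what prevents the correction terms $\sum \Re(k/k_n)$ from blowing up: without the latter, the genus-one product --- even for zeros of density $O(r)$ --- need not be of exponential type, because the first-order centerings $e^{k/k_n}$ would not cancel coherently. Once these estimates are in place, the far part and the subsequential identification of $C(k)$ with $e^{-k\Sigma^\ast} P(k)$ are routine.
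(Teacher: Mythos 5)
The paper does not actually prove this statement: it is quoted as a classical theorem with a reference to Koosis (p.~21), and the passage from the genus-one canonical product $C_1(k)=\prod(1-k/k_n)e^{k/k_n}$ to the plain product $C(k)$ is delegated to ``standard arguments'' in Levin (p.~130). Your proposal supplies exactly that standard argument, and the quantitative part is correct: the two Stieltjes integrations by parts giving $\sum|k_n|^{-2}\le 2K/|k_1|$ and $\sum_{|k_n|\le 2R}\log(1+R/|k_n|)=O(R)$ are right, the far/near splitting with $|\log E_1(w)|\le C_0|w|^2$ for $|w|\le 1/2$ is the usual Lindel\"of estimate, and the identification of the correction terms $\sum\Re(k/k_n)$ with $\Re(k\,\Sigma(2R))$ is precisely where the boundedness hypothesis enters. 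So as a proof that the canonical product has exponential type, this is complete and is the intended argument.

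The one soft spot is the Bolzano--Weierstrass extraction. Under the stated hypothesis --- $|\Sigma(r)|$ merely \emph{bounded} --- the limit $\lim_{r\to+\iy}\prod_{|k_n|\le r}(1-k/k_n)$ need not exist at all: $\Sigma(r)$ is a bounded step function whose jumps tend to zero, and such a function can oscillate, so different subsequences $r_j$ can produce different limits $e^{-k\Sigma^\ast}P(k)$. Your argument therefore proves that every subsequential limit is entire of exponential type, but not that the limit in the statement exists. To close this you must either strengthen the hypothesis to ``$\lim_{r\to+\iy}\Sigma(r)$ exists'' (which is how the result is actually used: for zeros of a Cartwright-class function this is guaranteed by \er{p2e12}, and for the perturbed zeros in Theorem \ref{thm:jost_pert} the correction $\sum(1/k_n-1/k^o_n)$ converges absolutely since $\r\in\ell^1$ and $\sum|k^o_n|^{-2}<\iy$), or read the theorem as presupposing existence of the limit, in which case the full limit coincides with your subsequential one and the type bound applies. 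Either way the fix is one sentence, but as written the step ``along this subsequence the left-hand side converges'' does not deliver the conclusion as stated.
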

\begin{remark}
	Using this theorem, we can construct the entire functions of exponential type by its zeros.
	Note that the Lindel{\"o}f theorem is usually formulated for the canonical product of the form
	$$
		C_1(k) = \prod_{n \geq 1} \left(1 -\frac{k}{k_n}\right) e^{\frac{k}{k_n}},\qq k \in \C.
	$$
	We have replaced $C_1$ by $C$ using the standard arguments (see, e.g., p. 130 in \cite{L96}).
\end{remark}
We also need the following simple lemma about asymptotic of the counting function of sequence with
bounded perturbation. 
\begin{lemma} \label{lm:count_func}
	Let the sequences $(k_n)_{n \geq 1}$ and $(k^o_n)_{n \geq 1}$ be arranged that
	$$
		0 < |k_1| \leq |k_2| \leq |k_3| \leq \ldots,\qq 0 < |k^o_1| \leq |k^o_2| \leq |k^o_3| \leq \ldots.
	$$
	and let
	$$
		n(r) = \# \{\, k_m,\, m \geq 1\,  \mid \,  |k_m| \leq r \,\},\qq 
		n_o(r) = \# \{\, k^o_m,\, m \geq 1\,  \mid \,  |k^o_m| \leq r \,\}.
	$$
	Suppose also that
	\begin{enumerate}[label = {(\roman*)}]
		\item $\displaystyle \sup_{n \geq 1} |k_n - k_n^o| = s < \iy$,
		\item $n_o(r) = Cr + o(r)$ as $r \to \iy$ for some $C \in \R$.
	\end{enumerate}
	Then we have $n(r) = Cr + o(r)$ as $r \to \iy$.
\end{lemma}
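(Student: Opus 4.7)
The plan is to establish a simple two-sided sandwich: since each $k_n$ lies within distance $s$ of the corresponding $k_n^o$, the counting functions $n$ and $n_o$ differ only by a shift of the radius by $s$.

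First I would unpack the definition: interpreting the counting functions as counting indices with multiplicity (so $n(r) = \#\{m \geq 1 \mid |k_m| \leq r\}$, and similarly for $n_o$), the triangle inequality gives, for any $r \geq s$,
\[
    n_o(r - s) \;\leq\; n(r) \;\leq\; n_o(r + s).
\]
Indeed, if $|k_m^o| \leq r - s$ then $|k_m| \leq |k_m^o| + |k_m - k_m^o| \leq (r-s) + s = r$, which proves the left inequality; and if $|k_m| \leq r$ then $|k_m^o| \leq r + s$, which proves the right inequality.

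Next I would feed the asymptotic hypothesis $n_o(r) = Cr + o(r)$ into both ends of the sandwich. Writing $n_o(r+s) = C(r+s) + o(r+s) = Cr + Cs + o(r)$ and similarly $n_o(r-s) = Cr - Cs + o(r)$, the display above yields
\[
    Cr - Cs + o(r) \;\leq\; n(r) \;\leq\; Cr + Cs + o(r),
\]
so $|n(r) - Cr| \leq Cs + o(r)$. Since $Cs$ is a constant it is $o(r)$ as $r \to \infty$, and therefore $n(r) = Cr + o(r)$, as claimed.

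I do not anticipate any real obstacle: the entire argument is the triangle inequality plus the observation that a bounded additive error is absorbed into the $o(r)$ term. The only mild subtlety is making sure the counting functions are counted consistently with multiplicity so that the implication $|k_m^o| \leq r - s \Rightarrow |k_m| \leq r$ translates directly into the inequality for $n$ and $n_o$; since both sequences are indexed over the same $m \geq 1$, this is automatic.
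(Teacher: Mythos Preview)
Your argument is correct and follows essentially the same sandwich approach as the paper's proof; the only cosmetic difference is that the paper writes the inequalities with $2s$ in place of your $s$, but your sharper constant is what the triangle inequality actually gives, and either version yields the conclusion.
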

\begin{proof}
	Since $\displaystyle \sup_{n \geq 1} |k_n - k_n^o| = s$, we have
	$$
		n_o(r-2s) \leq n(r) \leq n_o(r+2s),
	$$
	for any $r > 0$.
	Using $n_o(r) = Cr + o(r)$ as $r \to \iy$, we get
	$$
		C(r-2s) + o(r) \leq n(r) \leq C(r+2s) + o(r)
	$$
	as $r \to \iy$, which yields that $n(r) = Cr + o(r)$ as $r \to \iy$.
\end{proof}

\subsection{Resonances}
Recall that resonances of the operator $H$ are zeros of the associated Jost function, which is 
entire function of exponential type. Moreover, using the Paley-Wiener theorem 
(see, e.g., p.30 in \cite{Koo98}), we have that an entire function having form (\ref{p2e1}) 
belongs to the Cartwright class. Thus, the resonances and the Jost function of the operator $H$
have all properties, which was discussed above, and we have the following corollary
(see Corollary 1.2 in \cite{KM20a}). 
\begin{corollary} \label{cor:jost_cart}
    Let $q \in \cP$. Then we have $\psi(\cdot,q) \in \cE_{Cart}$ and it satisfies (\ref{p2e11}-\ref{p2e13}).
\end{corollary}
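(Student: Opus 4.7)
The strategy is to verify directly that $\psi(\cdot,q)$ lies in the Cartwright class $\cE_{Cart}(0,2\gamma)$, and then invoke the general theorems quoted earlier (Hadamard factorization and Levinson's theorem) to obtain the structural statements \er{p2e11}--\er{p2e13}. The key input is Theorem \ref{thm:inv_jost}\,ii), which gives $\psi(\cdot,q) \in \cJ$, so by the definition of $\cJ$ we may write
\[
    \psi(z) = 1 + \cF g(z) = 1 + \int_0^{\gamma} g(s)\,e^{2izs}\,ds, \qquad z \in \C,
\]
for some $g \in \cP$, and $\psi(z) \ne 0$ on $\ol{\C}_+$.

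First I would check the exponential-type estimate. Since $g \in L^2(0,\gamma) \subset L^1(0,\gamma)$, the formula displayed above defines an entire function satisfying
\[
    |\psi(z)| \le 1 + \|g\|_{L^1(0,\gamma)}\,e^{2\gamma (\Im z)_-}, \qquad z \in \C,
\]
so $\psi$ is of exponential type at most $2\gamma$. Plugging in $z = iy$ with $y > 0$ gives $|\psi(iy)| \le 1 + \|g\|_{L^1(0,\gamma)}$, hence $\t_+(\psi) = 0$. The equality $\t_-(\psi) = 2\gamma$ is the Paley--Wiener half: because $\sup\supp g = \gamma$ by the definition of $\cP$, a standard Paley--Wiener-type argument (the indicator of the Laplace transform equals the supremum of the support) forces $\t_-(\psi) = 2\gamma$. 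The integrability
\[
    \int_{\R} \frac{\log(1+|\psi(k)|)\,dk}{1+k^2} < \infty
\]
is immediate, since $|\psi(k)| \le 1 + \|g\|_{L^1(0,\gamma)}$ is bounded on the real line. Combining these three facts gives $\psi \in \cE_{Cart}(0,2\gamma) = \cE_{Cart}$.

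Once membership in $\cE_{Cart}$ is established, \er{p2e11} and \er{p2e12} follow by applying the general Hadamard factorization recalled on p.\ 127--130 of \cite{L96} to $\psi$, with $\vk = \frac{2\gamma - 0}{2} = \gamma$ and with $p = 0$ since $\psi(0) \neq 0$ (indeed $\psi$ has no zeros on $\ol{\C}_+$). Similarly, \er{p2e13} is the direct application of Theorem \ref{thm:lev} with $\alpha = 0$, $\beta = 2\gamma$, producing $n(r,\psi) = \frac{2\gamma}{\pi} r + o(r)$.

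The only mildly non-trivial step is the equality $\t_-(\psi) = 2\gamma$ rather than merely the upper bound; everything else is a direct unwinding of the definition of $\cJ$ and a citation of the standard theorems. Since this statement is labeled as a corollary and explicitly attributed to Corollary 1.2 of \cite{KM20a}, I would present the argument compactly: display the representation $\psi = 1 + \cF g$, record the type and boundedness estimates, invoke Paley--Wiener for the lower type, and then quote the Hadamard and Levinson theorems verbatim to obtain \er{p2e11}--\er{p2e13}.
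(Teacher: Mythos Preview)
Your proposal is correct and matches the paper's approach: the paper does not give a self-contained proof of this corollary but simply remarks that, by the Paley--Wiener theorem, any entire function of the form \er{p2e1} lies in the Cartwright class, and then cites Corollary~1.2 of \cite{KM20a}. Your argument is exactly the unpacking of that remark. One small point: from the bound $|\psi(iy)|\le 1+\|g\|_{L^1(0,\g)}$ you only get $\t_+(\psi)\le 0$; to conclude $\t_+(\psi)=0$ you should also note that $\psi(iy)\to 1$ as $y\to+\infty$ (dominated convergence), so $\psi(iy)$ does not decay and the $\limsup$ is exactly $0$.
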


We also need the following result about position of resonances (see Theorem 1.3 in \cite{KM20a}).
\begin{theorem} \label{thm:res_pos}
    Let $q \in \cP$ and let $(k_n)_{n \geq 1}$ be its resonances. Let $\ve > 0$. Then there exists
    a constant $C = C(\ve,q) \geq 0$ such that the following inequality holds true for each $n \geq 1$:
    \[ \label{p1e1}
        2 \g \Im k_n \leq \ln \left( \ve + \frac{C}{|k_n|} \right).
    \]
    In particular, for any $A > 0$, there are only finitely many resonances in the strip
    \[ \label{p1e2}
        \{ \, k \in \C \, \mid \, 0 > \Im k > -A \, \}.
    \]
\end{theorem}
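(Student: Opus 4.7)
The plan is to exploit the Paley--Wiener-type representation built into the definition of $\cJ$, namely (\ref{p2e1}), which provides $\psi(z) = 1 + \int_0^{\g} g(s) e^{2izs}\,ds$ for some $g \in L^2(0,\g)$ with $\sup \supp g = \g$. After the substitution $t = \g - s$, this becomes
$$
	\psi(z) = 1 + e^{2i\g z} \int_0^{\g} h(t) e^{-2izt}\,dt, \qq h(t) := g(\g - t).
$$
Inserting $z = k_n$ in the equation $\psi(k_n) = 0$ and taking absolute values yields the key identity
$$
	e^{2\g \Im k_n} = \biggl| \int_0^{\g} h(t) e^{-2ik_n t}\,dt \biggr|,
$$
which reduces (\ref{p1e1}) to a quantitative Riemann--Lebesgue bound on the right-hand side, valid uniformly over $k_n$ in the closed lower half-plane.

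For the main estimate I would fix $\ve > 0$ and decompose $h = h_1 + h_2$ via density of $C^1([0,\g])$ in $L^2(0,\g)$, so that $\|h_1\|_{L^2(0,\g)} < \ve/\sqrt{\g}$ and $h_2 \in C^1([0,\g])$ (both depending on $\ve$ and $q$). Since $\Im k_n \le 0$ gives $|e^{-2ik_n t}| = e^{2t\Im k_n} \le 1$ on $[0,\g]$, Cauchy--Schwarz handles the rough part:
$$
	\biggl| \int_0^{\g} h_1(t) e^{-2ik_n t}\,dt \biggr| \le \|h_1\|_{L^1(0,\g)} \le \sqrt{\g}\,\|h_1\|_{L^2(0,\g)} < \ve.
$$
For the smooth part, a single integration by parts combined with $e^{2\g\Im k_n} \le 1$ and $|e^{-2ik_n t}| \le 1$ gives
$$
	\biggl| \int_0^{\g} h_2(t) e^{-2ik_n t}\,dt \biggr| \le \frac{C(\ve,q)}{|k_n|}, \qq C(\ve,q) := \tfrac12 \bigl( |h_2(0)| + |h_2(\g)| + \|h_2'\|_{L^1(0,\g)} \bigr).
$$
Summing the two bounds yields $e^{2\g\Im k_n} \le \ve + C(\ve,q)/|k_n|$, which becomes (\ref{p1e1}) after taking logarithms.

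The second conclusion follows by specialisation: given $A > 0$, apply the first part with $\ve = \tfrac12 e^{-2\g A}$, so that any resonance with $\Im k_n > -A$ forces $e^{-2\g A} < \ve + C/|k_n|$, hence $|k_n| < 2 C e^{2\g A}$. Thus all resonances in the strip (\ref{p1e2}) lie in a bounded disk, and $\psi$, being entire, has only finitely many zeros there. The one conceptually non-routine step is the passage from $\psi(k_n) = 0$ to the identity $e^{2\g \Im k_n} = |\int h(t) e^{-2ik_n t}\,dt|$; everything else reduces to a standard density plus integration-by-parts argument made uniform on $\ol{\C_-}$ by the bound $|e^{-2ik_n t}| \le 1$ available there.
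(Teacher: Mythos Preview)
Your argument is correct. The paper does not actually prove this theorem; it is quoted verbatim as Theorem~1.3 from \cite{KM20a} and simply invoked where needed. So there is no in-paper proof to compare against.

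As for the argument itself, each step checks out. The substitution $t=\g-s$ and the identity $e^{2\g\Im k_n}=\bigl|\int_0^{\g}h(t)e^{-2ik_nt}\,dt\bigr|$ obtained from $\psi(k_n)=0$ are correct, and the crucial observation that $|e^{-2ik_nt}|\le 1$ on $[0,\g]$ for $k_n\in\ol\C_-$ is exactly what makes the density-plus-integration-by-parts bound uniform. The dependence of $C$ on $\ve$ through the choice of $h_2$ is consistent with the statement. The deduction of the finiteness claim from the quantitative estimate is also clean. One cosmetic point: you use that $\psi(\cdot,q)$ admits the representation \er{p2e1}; in the logic of the paper this is the forward direction of Theorem~\ref{thm:inv_jost}\,ii), which is elementary and does not rely on Theorem~\ref{thm:res_pos}, so there is no circularity.
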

\begin{remark}
	This theorem describe so called forbidden domain for resonances. Moreover,
    if $q' \in L^1(\R_+)$, then estimate (\ref{p1e1}) and the forbidden domain (\ref{p1e2})
    can be given in more detailed form (see Theorem 2.7 in \cite{IK14b}).
\end{remark}
Using Theorem \ref{thm:res_pos}, we obtain the following useful corollary.
\begin{corollary} \label{cor:im_part}
	Let $q \in \cP$ and let $(k_n)_{n \geq 1}$ be its resonances. Then we have $\Im k_n \to -\iy$
	as $n \to \iy$.
\end{corollary}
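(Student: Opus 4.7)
The plan is to deduce the corollary directly from the forbidden-domain statement in Theorem \ref{thm:res_pos}, once I know that the sequence of resonances is infinite and unbounded in modulus. Since the ordering $0 < |k_1| \leq |k_2| \leq \dots$ together with discreteness of the zero set of an entire function already gives $|k_n| \to \infty$ as soon as the sequence is infinite, the only preliminary item I need to secure is infiniteness.

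First I would invoke Corollary \ref{cor:jost_cart}, which puts $\psi(\cdot,q) \in \cE_{Cart} = \cE_{Cart}(0,2\g)$. Levinson's theorem (Theorem \ref{thm:lev}) then yields the asymptotic
$$
    n(r,\psi) = \frac{2\g}{\pi}\,r + o(r),\qq r \to +\iy.
$$
Because $\g > 0$ is fixed, $n(r,\psi) \to \iy$, so the sequence $(k_n)_{n \geq 1}$ is infinite and $|k_n| \to \iy$ as $n \to \iy$.

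Next I would fix an arbitrary $A > 0$ and apply the second assertion of Theorem \ref{thm:res_pos}: the strip $\{\, k \in \C \mid 0 > \Im k > -A \,\}$ contains only finitely many resonances. Hence there are only finitely many indices $n$ for which $-A < \Im k_n < 0$, and since every $k_n$ lies in $\C_-$, for all sufficiently large $n$ one must have $\Im k_n \leq -A$. As $A > 0$ was arbitrary, this is precisely $\Im k_n \to -\iy$.

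There is no real obstacle here: the argument is a one-line consequence of Theorem \ref{thm:res_pos} combined with the Levinson/Cartwright asymptotics already established for $\psi \in \cJ$. The only thing to be mildly careful about is to argue that the sequence is infinite in the first place (so that the asymptotic statement is not vacuous), which is exactly what the slope $2\g/\pi > 0$ in Levinson's formula gives.
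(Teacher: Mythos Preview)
Your argument is correct and matches the paper's approach: the paper does not write out a proof but simply states the corollary as an immediate consequence of Theorem~\ref{thm:res_pos}, which is exactly what you do via the finitely-many-resonances-in-each-strip statement. Your additional care in invoking Corollary~\ref{cor:jost_cart} and Levinson's theorem to ensure the sequence is infinite is a reasonable point of rigor that the paper leaves implicit.
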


\subsection{Banach algebras}
Recall that we have introduced the Fourier transform $\cF$ and its inverse $\cF^{-1}$ on $L^2(\R)$ by
$$
	\begin{aligned}
		(\cF g)(k) &= \int_{\R} g (s) e^{2iks} ds,\qq k \in \R,\\
		\cF^{-1} g(s) &= \frac{1}{\pi} \int_{\R} g(k) e^{-2iks} dk,\qq s \in \R.
	\end{aligned}	
$$
Moreover, we have introduced the notation $\hat g = \cF^{-1} g$. We introduce the following Banach space
$$
    \begin{aligned}
        \cL_{+} &= L^2(\R_{+}) \cap L^1(\R_{+}),\qq
        \| \cdot \|_{\cL_{+}} = \| \cdot \|_{L^2(\R_{+})} + \| \cdot \|_{L^1(\R_{+})}
	\end{aligned}
$$
and the following Banach algebras with pointwise multiplication
$$
    \begin{aligned}
		\hat\cL_{+} &= \{ \,\cF g \, \mid \, g \in \cL_{+} \,\},\qq
        \| \cF g \|_{{\hat \cL}_{+}} = \| g \|_{\cL_{+}},\\
        {\cW}_{+} &= \{ \, c + g \, \mid \, (c,g) \in \C \ts \hat \cL_{+} \,\},\qq
        \| c + g \|_{{\cW}_{+}} = |c| + \| g \|_{\hat \cL_{+}}.
    \end{aligned}
$$
It is well-known that ${\cW}_+$ is unital Banach algebra (see, e.g., Chapter 17 in \cite{GRS64}).
Moreover, due to Paley-Wiener theorem and the Riemann-Lebesgue lemma (see, e.g., Theorem IX.7 in \cite{RS80}),
each element of $\hat \cL_+$ or $\cW_+$ is bounded continuous function on $\ol \C_+$.
The spectrum of $f \in \cW_+$ is given by
$$
	\s(f) = \{\, f(k) \,\mid\, k \in \ol \C_+ \,\} \cup \{\,\lim_{k \to \iy} f(k)\,\}.
$$
Thus, $f \in \cW_+$ is invertible in $\cW_+$ if and only if $f(k) \neq 0$ for any 
$k \in \ol \C_+$ and $\lim_{k \to \iy} f(k) \neq 0$. Recall that in each Banach algebra there exists
a holomorphic functional calculus, that is the following theorem holds true (see, e.g., Chapter 6 in \cite{GRS64}).
\begin{theorem} \label{thm5}
	Let $\vp$ be an analytic function on some open domain $D$ and let $f \in \cW_+$ such that
	$\s(f) \ss D$. Suppose that $\G \ss D$ is a closed rectifiable curve such that $\s(x)$ is 
	contained in the interior of the domain bounded by $\G$. 
	Then there exists a unique $\vp(f) \in \cW_+$ given by
	$$
		\vp(f) = \frac{1}{2\pi i} \int_{\G} (\l - f)^{-1} \vp(\l) d\l,
	$$
	where the integral does not depend on the choice of $\G$, subject only to the conditions stated,
	and $\vp(f)$ depends continuously on $f \in \cW_+$ such that $\s(f)$ is 
	contained in the interior of the domain bounded by $\G$.
\end{theorem}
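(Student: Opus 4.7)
The plan is to follow the classical holomorphic functional calculus construction, specialized to the commutative unital Banach algebra $\cW_+$, using the explicit description of its spectrum recalled just before the theorem statement. The three items to verify are: (i) the integrand is a norm-continuous $\cW_+$-valued function on $\G$, so the Riemann integral defines an element of $\cW_+$; (ii) the resulting element is independent of the admissible curve $\G$; and (iii) it depends continuously on $f$.

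For (i), the key observation is that for every $\l \in \G$ one has $\l \notin \s(f)$, so by the spectral description of $\cW_+$ the element $\l - f$ is invertible in $\cW_+$. The resolvent identity
\[
    (\l - f)^{-1} - (\m - f)^{-1} = -(\l - \m)(\l - f)^{-1}(\m - f)^{-1},
\]
together with submultiplicativity of the $\cW_+$-norm, shows that $\l \mapsto (\l - f)^{-1}$ is norm-continuous (in fact norm-analytic) on an open neighbourhood of $\G$. Since $\G$ is compact and rectifiable and $\vp$ is continuous on $\G$, the integrand $(\l - f)^{-1}\vp(\l)$ is a norm-continuous $\cW_+$-valued function on $\G$, hence the Riemann integral converges in the norm of $\cW_+$.

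For (ii), I would fix two admissible curves $\G_0, \G_1 \ss D$ and pass to duality. For any $\ell \in \cW_+^*$, the scalar function $\l \mapsto \ell\bigl((\l - f)^{-1}\bigr)\vp(\l)$ is holomorphic on $D \sm \s(f)$ by the resolvent identity, so the classical scalar Cauchy theorem forces its integrals along $\G_0$ and $\G_1$ to coincide. Since $\cW_+^*$ separates points (Hahn-Banach), the $\cW_+$-valued integrals along $\G_0$ and $\G_1$ must agree, which gives both contour independence and uniqueness of $\vp(f)$.

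For (iii), I would use a Neumann-series argument. Set $M = \sup_{\l \in \G} \|(\l - f)^{-1}\|_{\cW_+}$, which is finite since $\G$ is compact and the resolvent is norm-continuous on it. For $g \in \cW_+$ with $\|g - f\|_{\cW_+} < M^{-1}$ the identity
\[
    (\l - g)^{-1} = (\l - f)^{-1}\bigl(1 - (g - f)(\l - f)^{-1}\bigr)^{-1},\qq \l \in \G,
\]
gives a uniform bound on $\|(\l - g)^{-1}\|_{\cW_+}$ over $\l \in \G$. Combined with the fact that each element of $\cW_+$ is a bounded continuous function on $\ol\C_+$ with a limit at infinity (so its spectrum is a bounded subset of $\C$ that varies upper-semicontinuously with $g$), this forces $\s(g)$ to lie in the interior of $\G$ for $g$ near $f$, so the integral formula produces $\vp(g)$ over the same contour. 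The resolvent identity
\[
    (\l - g)^{-1} - (\l - f)^{-1} = (\l - g)^{-1}(g - f)(\l - f)^{-1}
\]
then yields $\|(\l - g)^{-1} - (\l - f)^{-1}\|_{\cW_+} \to 0$ uniformly on $\G$ as $\|g - f\|_{\cW_+} \to 0$, and after multiplying by $\vp(\l)$ and integrating $\|\vp(g) - \vp(f)\|_{\cW_+} \to 0$. The main point of care is the uniform statement that $\s(g)$ stays inside $\G$ for $g$ close to $f$, i.e., upper semicontinuity of the Banach-algebra spectrum, which is exactly what the Neumann step delivers.
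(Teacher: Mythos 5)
Your argument is the standard construction of the holomorphic functional calculus and is correct; the paper gives no proof of Theorem~\ref{thm5}, citing it as a classical fact (Chapter 6 of Gelfand--Raikov--Shilov) and adding only the remark that continuity follows from the resolvent identity --- which is precisely the identity driving your step (iii). The one point you gloss, upper semicontinuity of the spectrum, is in fact immediate in $\cW_+$ from the explicit description $\s(g)=\{\,g(k)\,\mid\,k\in\ol\C_+\,\}\cup\{\lim_{k\to\iy}g(k)\}$ together with the bound $\sup_{k\in\ol\C_+}|g(k)-f(k)|\le\|g-f\|_{\cW_+}$.
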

\begin{remark}
	The continuity of this mapping follows from the resolvent identity
	$$
		(\l - f_1)^{-1} - (\l - f_2)^{-1} = (\l - f_1)^{-1}(f_1 - f_2)(\l - f_2)^{-1},\qq f_1,f_2 \in \cW_+.	
	$$
\end{remark}
Using Theorem \ref{thm5}, we can consider analytic functions on $\cW_+$.
In particular, we introduce the logarithm and the exponential mappings on some subspaces of $\cW_+$. 
We introduce the following subspaces of $\cW_+$:
$$
	\begin{aligned}
		\cW_{log} &= \{\, f = 1 + g \, \mid \, g \in \hat \cL_+,\, f(x) \notin \R_- \cup \{0\},\, x \in \ol \C_+ \, \},\\
		\cW_{exp} &= \{\, f = 1 + g \, \mid \, g \in \hat \cL_+,\, |f(x)| > 0,\, x \in \ol \C_+ \, \}.
	\end{aligned}
$$
Let $\exp: f \mapsto e^{f(\cdot)}$, $f \in \hat \cL$, and
$\log : f \mapsto \log(f(\cdot))$, where 
the branch of the logarithm is fixed by the condition $\log(x) \in \R$ for any $x \in \R$.
Thus, we get the following corollary of Theorem~\ref{thm5}.
\begin{corollary} \label{p3c1}
	The mappings $\log : \cW_{log} \to \hat \cL_+$ and $\exp : \hat \cL_+ \to \cW_{exp}$
	are continuous.
\end{corollary}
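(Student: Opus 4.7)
The plan is to derive both statements from Theorem \ref{thm5} with $\vp = \exp$ (entire) or the principal $\log$ (holomorphic on $\C \sm (\R_- \cup \{0\})$); the only work beyond invoking Theorem \ref{thm5} is to verify that the outputs lie in $\hat\cL_+$ (for $\log$) and in $\cW_{exp} \ss 1 + \hat\cL_+$ (for $\exp$), rather than merely in $\cW_+$.

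For this bookkeeping I would use the direct-sum decomposition $\cW_+ = \C \cdot 1 \os \hat\cL_+$ built into the definition of $\cW_+$, which supplies a continuous character $\pi : \cW_+ \to \C$, $\pi(c + g) = c$, coinciding with the evaluation at infinity. The key identity is $\chi(\vp(f)) = \vp(\chi(f))$ for every character $\chi$ of $\cW_+$ and every $\vp$ to which Theorem \ref{thm5} applies; this follows by applying $\chi$ inside the Cauchy-type integral $\vp(f) = \frac{1}{2\pi i}\int_{\G} (\l - f)^{-1}\vp(\l)\,d\l$ and using Cauchy's formula (recall $\chi(f) \in \s(f) \ss \mathrm{int}(\G)$).

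For the exponential, given $f \in \hat\cL_+$, the spectrum $\s(f)$ is compact in $\C$, so any sufficiently large circle $\G$ encloses it, and Theorem \ref{thm5} supplies $\exp(f) \in \cW_+$ continuously in $f$. Applying $\pi$ gives $\pi(\exp(f)) = e^{\pi(f)} = e^0 = 1$, hence $\exp(f) \in 1 + \hat\cL_+$; applying the point evaluation $\chi_k$ for $k \in \ol\C_+$ gives $\exp(f)(k) = e^{f(k)} \neq 0$, so $\exp(f) \in \cW_{exp}$. Since the $\hat\cL_+$ norm equals the $\cW_+$ norm restricted to $\hat\cL_+$, continuity in the $\hat\cL_+$ norm follows from continuity in the $\cW_+$ norm.

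For the logarithm, given $f = 1 + g \in \cW_{log}$, the spectrum $\s(f) = f(\ol\C_+) \cup \{1\}$ is a compact subset of $D = \C \sm (\R_- \cup \{0\})$, so one chooses a contour $\G \ss D$ around $\s(f)$. Upper semi-continuity of the spectrum — for every $\l \in \G$ the resolvent $(f-\l)^{-1} \in \cW_+$ is bounded uniformly in $\l$ and persists under small perturbations of $f$ — ensures that any $\tilde f$ close enough to $f$ has $\s(\tilde f)$ inside $\G$, whereupon Theorem \ref{thm5} gives $\log(\tilde f) \in \cW_+$ continuously in $\tilde f$. Applying $\pi$ yields $\pi(\log(f)) = \log(1) = 0$, so $\log(f) \in \hat\cL_+$, and continuity transfers to the $\hat\cL_+$ norm as before. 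The only mildly delicate point is the upper semi-continuity argument for $\log$, but it is standard, and I expect no serious obstacle.
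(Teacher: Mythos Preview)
Your proposal is correct and follows the same route as the paper: both derive the corollary directly from Theorem~\ref{thm5} applied to $\vp=\exp$ and $\vp=\log$. The paper in fact gives no proof at all---it simply declares the statement a corollary of Theorem~\ref{thm5}---so your character argument ($\pi(\vp(f))=\vp(\pi(f))$) supplying the bookkeeping that $\exp$ lands in $1+\hat\cL_+$ and $\log$ lands in $\hat\cL_+$ is a legitimate fleshing-out of what the paper leaves implicit, and your remark on upper semi-continuity of the spectrum is the standard justification for why the continuity clause in Theorem~\ref{thm5} applies locally.
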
 

Moreover, we estimate the norm of the logarithm mapping.
\begin{lemma} \label{lm:log_estimate}
	Let $f \in \hat \cL_+$ be such that $\| f \|_{\hat \cL_+} < \frac{1}{4}$. Then we have
	$\log(1 + f) \in \hat \cL_+$ and
	$$
		\| \log (1+f) \|_{\hat \cL_+} < C \| f \|_{\hat \cL_+},
	$$
	there $C > 0$ does not depend on $f$.
\end{lemma}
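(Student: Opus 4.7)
The plan is to expand $\log(1+f)$ as a Taylor series in the Banach algebra $\hat\cL_+$ and bound the resulting geometric series. The key preliminary is submultiplicativity of the $\hat\cL_+$ norm under pointwise multiplication.

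First I would verify that $\hat\cL_+$ is a Banach algebra under pointwise multiplication with a submultiplicative norm. For $u_j = \cF g_j \in \hat\cL_+$ ($j=1,2$), the convolution theorem gives $u_1 u_2 = \cF(g_1 * g_2)$, and $\supp(g_1 * g_2) \ss \R_+$ since $\supp g_j \ss \R_+$. Young's inequality yields
$$
\|g_1 * g_2\|_{L^1(\R_+)} \le \|g_1\|_{L^1(\R_+)}\|g_2\|_{L^1(\R_+)},\qq \|g_1 * g_2\|_{L^2(\R_+)} \le \|g_1\|_{L^1(\R_+)}\|g_2\|_{L^2(\R_+)},
$$
so $\|u_1 u_2\|_{\hat\cL_+} = \|g_1 * g_2\|_{\cL_+} \le \|g_1\|_{L^1(\R_+)}\|g_2\|_{\cL_+} \le \|u_1\|_{\hat\cL_+}\|u_2\|_{\hat\cL_+}$.

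With this in hand, set $r := \|f\|_{\hat\cL_+} < 1/4$. The series $h := \sum_{n\ge 1}(-1)^{n-1} f^n/n$ converges absolutely in the Banach space $\hat\cL_+$, and
$$
\|h\|_{\hat\cL_+} \le \sum_{n\ge 1} r^n/n = -\log(1-r).
$$
Since $x \mapsto -\log(1-x)/x$ is increasing on $(0,1)$, for $r \in [0,1/4]$ one has $-\log(1-r) \le 4\log(4/3)\,r$, yielding $\|h\|_{\hat\cL_+} \le C\|f\|_{\hat\cL_+}$ with $C = 4\log(4/3)$.

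It remains to identify $h$ with $\log(1+f)$ in the sense of Theorem~\ref{thm5} and Corollary~\ref{p3c1}; this is the only slightly delicate step. By the Paley-Wiener bound $\sup_{\ol\C_+}|u| \le \|u\|_{\hat\cL_+}$ (using $|e^{2iks}|\le 1$ for $s \ge 0$, $k \in \ol\C_+$), the spectrum of $1+f$ lies in $\{|\l - 1| \le 1/4\} \ss \C \sm (\R_- \cup \{0\})$, so $1+f \in \cW_{log}$. The partial sums $\vp_N(\l) := \sum_{n=1}^N (-1)^{n-1}(\l-1)^n/n$ converge to $\log \l$ uniformly on any compact subset of the disk $\{|\l - 1| < 1\}$, so by the continuity of the holomorphic functional calculus $\vp_N(1+f) \to \log(1+f)$ in $\cW_+$. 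Since $\vp_N(1+f) = \sum_{n=1}^N (-1)^{n-1} f^n/n$ also converges to $h$ in $\hat\cL_+ \ss \cW_+$, we conclude $h = \log(1+f)$, and the estimate above is exactly the required bound.
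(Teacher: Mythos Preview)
Your proof is correct and in fact more elementary than the paper's. The paper proves the lemma via the contour-integral form of the holomorphic functional calculus (Theorem~\ref{thm5}): it takes $\Gamma=\{|\lambda|=2r\}$ with $r=\|f\|_{\hat\cL_+}$, writes $\log(1+f)=\frac{1}{2\pi i}\int_\Gamma(\lambda-f)^{-1}\log(1+\lambda)\,d\lambda$, and then bounds $\|(\lambda-f)^{-1}\|$ by a Neumann series and $|\log(1+\lambda)|$ by $C|\lambda|$ on $|\lambda|<1/2$. You instead expand $\log(1+f)$ as its Taylor series in the Banach algebra and sum the resulting geometric-type bound directly. What you gain is an explicit constant $C=4\log(4/3)$ and a proof that never touches contour integrals; you also make explicit the submultiplicativity of $\|\cdot\|_{\hat\cL_+}$ via Young's inequality, which the paper uses only implicitly through the assertion that $\cW_+$ is a Banach algebra. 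The paper's route has the advantage that it is the same mechanism it later invokes for the $\exp$ map (Corollary~\ref{p3c1}), so it keeps the argument uniform; your power-series argument would of course also handle $\exp$ just as easily.
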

\begin{proof}
	Let $\|f\|_{\hat \cL} = r < \frac{1}{4}$ and let $\G = \{\, z \in \C \, \mid \, |z| = 2r \,\}$. 
	Since $|\l| \leq \|f\|_{\hat \cL}$ for any $\l \in \s(f)$, we have that $\s(f)$ is 
	contained in the interior of the domain bounded by $\G$. 
	Recall that the analytic branch of the logarithm $\log(\cdot)$ on $\C \sm (-\iy,0]$ 
	is fixed by the condition $\log(z) \in \R$ for any $z > 0$. 
	Due to $r < \frac{1}{4}$, we have $1 + \l \in \C \sm (-\iy,0]$ for any $\l \in \G$. 
	Thus, using Theorem \ref{thm5}, we have
	$$
		\log(1 + f) = \frac{1}{2\pi i} \int_{\G} (\l - f )^{-1} \log(1 + \l) d\l,
	$$
	which yields
	\[ \label{log_estimate:eq1}
		\|\log(1 + f)\|_{\hat \cL_+} = \frac{1}{2\pi} \int_{\G} \|(\l - f )^{-1}\|_{\hat \cL_+} |\log(1 + \l)| |d\l|.
	\]
	Firstly, we estimate $\|(\l - f )^{-1}\|_{\hat \cL_+}$. Since $|\l| = 2r$ for any $\l \in \G$, we have
	\[ \label{log_estimate:eq2}
		\|(\l - f )^{-1}\|_{\hat \cL_+} = \frac{1}{|\l|} \Bigl\|\left(1 - \frac{f}{\l} \right)^{-1}\Bigr\|_{\hat \cL_+}
		\leq \frac{1}{|\l|} \frac{1}{1 - \frac{\left\| f\right\|}{|\l|}} = \frac{1}{2r} \frac{1}{1 - \frac{r}{2r}} = \frac{1}{r},
		\qq \l \in \G.
	\]
	Secondly, we estimate $|\log(1 + \l)|$. Since $\log(1+\l) = \l + o(\l)$ as $\l \to 0$, there exists
	a constant $C > 0$ such that
	\[ \label{log_estimate:eq3}
		|\log(1 + \l)| \leq C|\l|
	\]
	for any $\l \in \C$ such that $|\l| < 1/2$.
	Substituting (\ref{log_estimate:eq2}) and (\ref{log_estimate:eq3}) in (\ref{log_estimate:eq1}) and
	using $|\l| = 2r$ for any $\l \in \G$, we get
	$$
		\|\log(1 + f)\|_{\hat \cL_+} \leq \frac{1}{2\pi} \int_{\G} \frac{1}{r} C r |d\l| = 
		\frac{C}{2\pi} \int_{\G} |d\l| = 2Cr = 2C\| f \|_{\hat \cL_+}.
	$$
\end{proof}

\section{Proof of the main theorem}
Firstly, we consider simple function, which will be used in the proof of the main theorem. 
\begin{lemma} \label{lm:wiener_estimate}
	Let $k_1,k_2 \in \C_-$, $\r = k_2 - k_1$ and let 
	$$
		g(k) = 1 + \frac{\r}{k_1 - k},\qq k \in \C \sm \{k_1\}.
	$$
	Then we have $g \in \cW_+$, $g(k) \notin (-\iy,0]$ for any $k \in \ol \C_+$ and
	$$
		\|g-1\|_{\hat \cL_+} = \frac{|\r|}{|\Im k_1|^{\frac{1}{2}}}\left(1 + \frac{1}{|\Im k_1|^{\frac{1}{2}}} \right).
	$$
\end{lemma}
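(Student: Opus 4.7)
Writing $\r = k_2 - k_1$, observe that $g(k) - 1 = \r/(k_1 - k)$, so the whole lemma reduces to understanding the single Cauchy-type kernel $1/(k_1-k)$. The plan is to exhibit its inverse Fourier image on $\R_+$ explicitly, and then read off all three assertions from that formula.

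First, since $\Im k_1 < 0$, I would look for $h \in \cL_+$ of the form $h(s) = C\, e^{-2ik_1 s}\1_{s \ge 0}$; the exponential decays on $\R_+$ precisely because of $\Im k_1 < 0$, so $h$ is automatically in $L^1(\R_+) \cap L^2(\R_+)$. A direct evaluation
\[
  \int_0^{\iy} e^{2i(k - k_1)s}\,ds = \frac{1}{2i(k_1 - k)},\qq k \in \ol\C_+,
\]
(valid on $\ol\C_+$ since $\Im(k - k_1) > 0$) then fixes the constant $C = 2i\r$ so that $\cF h(k) = \r/(k_1 - k) = g(k) - 1$. This shows $g - 1 \in \hat\cL_+$, hence $g \in \cW_+$.

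Second, the norm identity drops out from two elementary exponential integrals. Since $|h(s)| = 2|\r|\, e^{2 \Im k_1 \, s}$ on $\R_+$, one gets
\[
  \|h\|_{L^1(\R_+)} = \frac{|\r|}{|\Im k_1|},\qq \|h\|_{L^2(\R_+)} = \frac{|\r|}{|\Im k_1|^{1/2}},
\]
and summing these yields
$\|g-1\|_{\hat\cL_+} = \|h\|_{\cL_+} = \frac{|\r|}{|\Im k_1|^{1/2}}\bigl(1 + \frac{1}{|\Im k_1|^{1/2}}\bigr)$.

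Third, for the exclusion $g(k) \notin (-\iy, 0]$ on $\ol\C_+$, I would rewrite $g(k) = (k_2 - k)/(k_1 - k)$. Because $k_1, k_2 \in \C_-$ and $k \in \ol\C_+$, both $k_j - k$ lie strictly in the open lower half-plane, so $\arg(k_j - k) \in (-\pi, 0)$, and consequently $\arg g(k) \in (-\pi, \pi)$ strictly. Combined with $g(k) \ne 0$ (otherwise $k = k_2 \in \C_-$, contradicting $k \in \ol\C_+$), this rules out $g(k)$ being a non-positive real. I do not foresee any serious obstacle; the only point needing care is to keep the Fourier sign conventions consistent so that the constant $C = 2i\r$ matches $\r/(k_1 - k)$ with no extra factor.
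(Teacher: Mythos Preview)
Your proposal is correct and follows essentially the same approach as the paper: the paper computes $\cF^{-1}(g-1)$ via Jordan's lemma to obtain $2i\r\,e^{-2ik_1 s}\vt(s)$ and then the same $L^1$ and $L^2$ integrals, while you verify the identity in the forward direction; for $g(k)\notin(-\iy,0]$ the paper argues by contradiction on imaginary parts, which is exactly your half-plane observation that $k_1-k$ and $k_2-k$ both lie in the open lower half-plane.
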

\begin{proof}
	Firstly, we show that $g(k) \notin (-\iy,0]$ for any $k \in \ol \C_+$ by contradiction. Let
	$$
		g(k) = 1 + \frac{\r}{k_1 - k} = \frac{k_2-k}{k_1 - k} = -c
	$$
	for some $c \in [0,+\iy)$ and $k \in \ol \C_+$. Then we have
	$$
		k_2 - k = -c(k_1-k).
	$$
	Considering the imaginary part of this identity and using $c \in \R$, we get
	$$
		\Im k_2 - \Im k = -c(\Im k_1 - \Im k).
	$$
	Due to $c, \Im k \geq 0$ and $\Im k_1, \Im k_2 < 0$, we have
	$$
		\Im k_2 - \Im k < 0,\qq -c(\Im k_1 - \Im k) \geq 0,
	$$
	which yields
	$$
		\Im k_2 - \Im k \neq -c(\Im k_1 - \Im k)
	$$
	and then we have a contradiction.
	
	Secondly, we show that $g \in \cW_+$. Using the Jordan lemma and $k_1 \in \C_-$, we obtain
	$$
		\cF^{-1} (g-1)(s) = \frac{\r}{\pi} \int_{\R} \frac{e^{-2iks}}{k_1 - k} dk = 2i\r e^{-2ik_1s}\vt(s),\qq s \in \R.
	$$
	Let $k_1 = x_1 + iy_1$. Then we have
	\[ \label{wiener_estimate:eq1}
		|\cF^{-1} (g-1)(s)| = 2|\r|e^{2y_1s} \vt(s),\qq s \in \R.
	\]
	Using (\ref{wiener_estimate:eq1}) and the fact that $y_1 < 0$, we get
	\[ \label{wiener_estimate:eq2}
		\|\cF^{-1} (g-1)\|_{L^1(\R)} = 2|\r|\int_0^{\iy} e^{2y_1s} ds = \frac{|\r|}{|y_1|}.
	\]
	Similarly we obtain
	\[ \label{wiener_estimate:eq3}
		\|\cF^{-1} (g-1)\|_{L^2(\R)} = 2|\r|\left|\int_0^{\iy} e^{4y_1s} ds\right|^{\frac{1}{2}} = 
		\frac{|\r|}{|y_1|^{\frac{1}{2}}}.
	\]
	Combining (\ref{wiener_estimate:eq2}) and (\ref{wiener_estimate:eq3}), we have
	$$
		\|g-1\|_{\hat \cL_+} = \frac{|\r|}{|y_1|^{\frac{1}{2}}}\left(1 + \frac{1}{|y_1|^{\frac{1}{2}}} \right).
	$$
\end{proof}

Secondly, we show the stability of Jost functions in $\cJ$ under $\ell^1$ perturbation of 
their zeros.
\begin{theorem} \label{thm:jost_pert}
	Let $f_o \in \cJ$ with zeros $(k_n^o)_{n \geq 1}$ arranged that 
	$0 < |k_1| \leq |k_2| \leq \ldots$ and let $\r = (\r_n)_{n \geq 1} \in \ell^1$ 
	be a sequence of complex numbers such that $k_n = k_n^o + \r_n \in \C_-$ for any $n \geq 1$. Then 
	there exists a unique $f \in \cJ$ such that $(k_n)_{n \geq 1}$ are its zeros. Moreover, if 
	$\| \r \|_{\ell^1} \to 0$, then we have $\r_{\cJ}(f,f_o) \to 0$.
\end{theorem}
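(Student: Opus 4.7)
The plan is to construct $f$ as $f=f_o\cdot G$, where $G=\prod_{n}g_n$ is an infinite product in the Banach algebra $\cW_+$ whose factors
\[
	g_n(k)=\frac{k_n-k}{k_n^o-k}=1+\frac{\r_n}{k_n^o-k}
\]
shift each zero of $f_o$ from $k_n^o$ to $k_n$. By Lemma \ref{lm:wiener_estimate}, each $g_n\in\cW_+$ is nonzero on $\ol\C_+$ and avoids $(-\iy,0]$ there, with $\|g_n-1\|_{\hat\cL_+}\le|\r_n|(|\Im k_n^o|^{-1/2}+|\Im k_n^o|^{-1})$. Corollary \ref{cor:im_part} together with Theorem \ref{thm:res_pos} forces $\inf_n|\Im k_n^o|>0$, so $\|g_n-1\|_{\hat\cL_+}\le M|\r_n|$ for a fixed constant $M$, and $\sum_n\|g_n-1\|_{\hat\cL_+}\le M\|\r\|_{\ell^1}<\iy$. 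For all but finitely many $n$ we have $\|g_n-1\|_{\hat\cL_+}<\frac{1}{4}$, so Lemma \ref{lm:log_estimate} puts $\log g_n\in\hat\cL_+$ with $\|\log g_n\|_{\hat\cL_+}\le C\|g_n-1\|_{\hat\cL_+}$. Hence, after isolating finitely many initial factors, $\sum_n\log g_n$ converges absolutely in $\hat\cL_+$, and Corollary \ref{p3c1} produces $G\in\cW_+$.

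Set $f:=f_o\cdot G$. Since $f_o,G\in\cW_+$ we have $f\in\cW_+$, and $f-1\in\hat\cL_+$ because $\hat\cL_+$ is an ideal of $\cW_+$, whence $\cF^{-1}(f-1)\in\cL_+\ss L^2(\R_+)$. Next, $G$ extends meromorphically to $\C$ with poles exactly at the $k_n^o$ (with multiplicities matching their multiplicities in the sequence), and these are cancelled by the zeros of the entire function $f_o$ at the same points; thus $f$ is entire with zero set $(k_n)_{n\ge1}$. Nonvanishing of $f$ on $\ol\C_+$ follows from the corresponding property of $f_o$, of each $g_n$, and of $\exp$.

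The main technical step is to refine $\supp\cF^{-1}(f-1)\ss\R_+$ to $\supp\cF^{-1}(f-1)\ss[0,\g]$ with supremum exactly $\g$. Since $f\in\cW_+$, $f$ is bounded on $\ol\C_+$ with $\lim_{k\to\iy}f(k)=1$, so $\t_+(f)=0$; moreover $f$ is bounded on $\R$, hence lies in the Cartwright class. Lemma \ref{lm:count_func} applied to $(k_n)$, starting from $n(r,f_o)=\frac{2\g}{\pi}r+o(r)$ (Corollary \ref{cor:jost_cart} combined with Theorem \ref{thm:lev}), yields $n(r,f)=\frac{2\g}{\pi}r+o(r)$. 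Levinson's theorem then forces $\t_-(f)=2\g$, i.e., $f\in\cE_{Cart}(0,2\g)$. By the Paley-Wiener theorem $\supp\cF^{-1}(f-1)\ss[0,\g]$, and since $\t_-(f)=2\g$ is attained, the supremum of the support equals $\g$. Therefore $f\in\cJ$.

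Uniqueness is immediate from Theorem \ref{thm:inv_jost}(iii). For continuity, $f-f_o=f_o(G-1)$ and the ideal property give
\[
	\r_\cJ(f,f_o)=\|\cF^{-1}(f-f_o)\|_{L^2(0,\g)}\le\|f-f_o\|_{\hat\cL_+}\le\|f_o\|_{\cW_+}\|G-1\|_{\hat\cL_+}.
\]
As $\|\r\|_{\ell^1}\to0$ we have $\sup_n\|g_n-1\|_{\hat\cL_+}\le M\|\r\|_{\ell^1}\to0$, so eventually every $\|g_n-1\|_{\hat\cL_+}<\frac{1}{4}$ and $\sum_n\|\log g_n\|_{\hat\cL_+}\le CM\|\r\|_{\ell^1}\to0$; continuity of $\exp$ (Corollary \ref{p3c1}) then gives $\|G-1\|_{\hat\cL_+}\to0$, and hence $\r_\cJ(f,f_o)\to0$. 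The hardest step is the support refinement in the third paragraph: the Banach algebra construction alone only gives support in $\R_+$, and the entire-function structure (Levinson's theorem combined with Paley-Wiener) is needed to cut this down to $[0,\g]$.
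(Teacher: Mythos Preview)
Your approach is essentially the paper's --- both build $\log(f/f_o)=\sum_n\log g_n$ in $\hat\cL_+$ via Lemmas~\ref{lm:wiener_estimate} and~\ref{lm:log_estimate}, exponentiate, and then run Levinson plus Paley--Wiener to pin down the support --- but you reverse the order: you construct $G$ in $\cW_+$ first and obtain entirety of $f=f_oG$ by pole cancellation, whereas the paper first builds $f$ as a Hadamard product and invokes Lindel\"of's theorem (Theorem~\ref{thm:lind}) to get an entire function of exponential type, and only afterwards identifies $f/f_o$ with $\exp(F)\in\cW_+$.

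This reordering creates a genuine gap. Your sentence ``$f$ is bounded on $\R$, hence lies in the Cartwright class'' is not justified: membership in $\cE_{Cart}(\a,\b)$ requires $f$ to be of \emph{exponential type}, and you have not established this. Boundedness on $\ol\C_+$ (from $f\in\cW_+$) gives only $\t_+(f)\le 0$; it says nothing about $\t_-(f)$. Without exponential type you cannot invoke Levinson's theorem (Theorem~\ref{thm:lev}), so the chain ``$n(r,f)=\tfrac{2\g}{\pi}r+o(r)\Rightarrow\t_+(f)+\t_-(f)=2\g$'' is unsupported, and with it the support refinement to $[0,\g]$ collapses. Your pole-cancellation argument is fine for showing $f$ is entire, but it gives no growth control in $\C_-$.

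The paper closes exactly this gap by checking the two hypotheses of Lindel\"of's theorem for $(k_n)$: the counting bound $n(r)=O(r)$ (via Lemma~\ref{lm:count_func}) and the boundedness of $\bigl|\sum_{|k_n|\le r}1/k_n\bigr|$ (split into real and imaginary parts, using \er{p2e12} for $(k_n^o)$ together with $\r\in\ell^1$ and $|\Im k_n^o|\to\infty$). This yields directly that the Hadamard product is entire of exponential type, after which everything else proceeds as you wrote. Alternatively you could try to bound $|G|$ on $\C_-$ away from $\{k_n^o\}$ and patch the remaining disks by the maximum modulus principle, but this is more laborious than simply verifying Lindel\"of's hypotheses.
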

\begin{proof}
	We show that there exists an entire function, which zeros are $(k_n)_{n \geq 1}$.
	Let 
	$$
		n(r) = \# \{\, k_m,\, m \geq 1\,  \mid \,  |k_m| \leq r \,\}.
	$$
	Due to Theorem \ref{cor:jost_cart}, we have $n_o(r) = \frac{2\g}{\pi} r + o(r)$ as $r \to \iy$.
	Since $\r \in \ell^1$, we have
	$$
		\sup_{m \geq 1} (k_m -k_m^o) = \sup_{m \geq 1} \r_m = s < \iy.
	$$
	Thus, using Lemma \ref{lm:count_func}, we get $n(r) = \frac{2\g}{\pi} r + o(r)$ as $r \to \iy$.
	
	Now, we show that $\displaystyle \Bigl| \sum_{|k_n| \leq r} \frac{1}{k_n} \Bigr|$ is bounded 
	as $r \to \iy$.
	Firstly, we consider the imaginary part of this sum. Using $k_n = k^o_n + \r_n$, we get
	\[ \label{jost_pert:eq1}
		\begin{aligned}
			\Bigl| \Im \sum_{|k_n| \leq r} \frac{1}{k_n} \Bigr| &\leq \sum_{|k_n| \leq r} \frac{|\Im k_n|}{|k_n|^2} 
			= \sum_{|k_n| \leq r} \frac{|\Im k_n^o|}{|k_n^o|^2} \frac{|\Im k_n|}{|\Im k_n^o|}\frac{|k_n^o|^2}{|k_n|^2}\\
			&= \sum_{|k_n| \leq r} \frac{|\Im k_n^o|}{|k_n^o|^2} \left| 1 + \frac{\Im \r_n}{\Im k_n^o}\right|
			\left|1 + \frac{\r_n}{k_n^o}\right|^{-2} = \sum_{|k_n| \leq r} \frac{|\Im k_n^o|}{|k_n^o|^2} \z_n,
		\end{aligned}
	\]
	where we have introduced $\displaystyle \z_n = \left| 1 + \frac{\Im \r_n}{\Im k_n^o}\right|
	\left|1 + \frac{\r_n}{k_n^o}\right|^{-2}$, $n \geq 1$.
	Due to Corollary \ref{cor:im_part}, we have $|k_n^o| \to \iy$ and $|\Im k_n^o| \to \iy$ as $n \to \iy$. 
	Recall that $\displaystyle \sup_{n \geq 1} \r_n = s < \iy$. Hence, we obtain
	$$
		\left|\frac{\Im \r_n}{\Im k_n^o}\right| \to 0,\qq \left|\frac{\r_n}{k_n^o}\right| \to 0
	$$
	as $n \to \iy$ and then there exists $C \in \R$ such that $|\z_n| < C$ for any $n \geq 1$.
	Substituting this estimate in (\ref{jost_pert:eq1}), we get
	$$
		\Bigl| \Im \sum_{|k_n| \leq r} \frac{1}{k_n} \Bigr| < C \sum_{|k_n| \leq r} \frac{|\Im k_n^o|}{|k_n^o|^2}.
	$$
	Using (\ref{p2e12}) and Corollary \ref{cor:jost_cart}, we have $\displaystyle \sum_{n \geq 1} \frac{|\Im k_n^o|}{|k_n^o|^2} < \iy$ and then the sum
	$\displaystyle \sum_{|k_n| \leq r} \frac{|\Im k_n^o|}{|k_n^o|^2}$ is bounded as $r \to \iy$. Thus,
	the sum $\displaystyle \Bigl| \Im \sum_{|k_n| \leq r} \frac{1}{k_n} \Bigr|$ is bounded as $r \to \iy$.
	
	Secondly, we consider the real part of this sum.
	$$
		\Bigl| \Re \sum_{|k_n| \leq r} \frac{1}{k_n} \Bigr| \leq \Bigl| \Re \sum_{|k_n| \leq r} \frac{1}{k^o_n} \Bigr| + \sum_{|k_n| \leq r} \left| \frac{1}{k_n} - \frac{1}{k_n^o} \right|.
	$$
	Due to (\ref{p2e12}) and Corollary \ref{cor:jost_cart}, the sum 
	$\displaystyle \sum_{|k_n| \leq r} \frac{1}{k^o_n}$ converges as $r \to \iy$ and then
	$\displaystyle \Bigl| \Re \sum_{|k_n| \leq r} \frac{1}{k^o_n} \Bigr|$ is bounded as $r \to \iy$.
	 Using $k_n = k_n^o + \r_n$, we get
	$$
		\sum_{|k_n| \leq r} \left| \frac{1}{k_n} - \frac{1}{k_n^o} \right| = \sum_{|k_n| \leq r} \frac{|\r_n|}{|k_n||k_n^o|} 
		\leq \sum_{|k_n| \leq r} \frac{|\r_n|}{\left|1 + \frac{\r_n}{k_n^o}\right|} \frac{1}{|k_n^o|^2}. 
	$$
	Since $\displaystyle \sup_{n \geq 1} \r_n = s < \iy$ and $|k_n^o| \to \iy$ as $n \to \iy$, we have 
	$\frac{|\r_n|}{|k_n^o|} \to 0$ as $n \to \iy$ and then there exists $C \in \R$ such that
	$$
	\frac{|\r_n|}{\left|1 + \frac{\r_n}{k_n^o}\right|} < C,\qq n \geq 1.
	$$
	By Corollary \ref{cor:im_part}, we have $\Im k_n^o \to \iy$ as $n \to \iy$. Hence, it follows from
	$\displaystyle \sum_{n \geq 1} \frac{|\Im k_n^o|}{|k_n^o|^2} < \iy$ that the series 
	$\displaystyle \sum_{n \geq 1} \frac{1}{|k_n^o|^2}$
	converges and then $\displaystyle \sum_{|k_n| \leq r} \frac{1}{|k_n^o|^2}$ is bounded as $r \to \iy$.
	Using these estimates, we see that 
	$\displaystyle \sum_{|k_n| \leq r} \left| \frac{1}{k_n} - \frac{1}{k_n^o} \right|$
	is bounded as $r \to \iy$, which yields that
	$\displaystyle \Bigl| \Re \sum_{|k_n| \leq r} \frac{1}{k_n} \Bigr|$ is bounded as $r \to \iy$.
	
	Now, it follows from Theorem \ref{thm:lind} that the function $f:\C \to \C$ given by
	$$
		f(k) = f(0) e^{ik\g} \lim_{r \to \iy} \prod_{|k_n| \leq r} \left( 1 - \frac{k}{k_n}\right),\qq k \in \C,
	$$
	is an entire function of exponential type, where 
	$\displaystyle f(0) = f_o(0) \lim_{r \to +\iy} \prod_{|k_n| \leq r} \frac{k_n}{k^o_n}$. 
	The last product converges, since
	$$
		\sum_{n \geq 1} \Bigl| 1 - \frac{k_n}{k^o_n} \Bigr| = \sum_{n \geq 1} \Bigl| \frac{\r_n}{k^o_n} \Bigr| \leq 
		\Bigl( \sum_{n \geq 1} |\r_n|^2 \Bigr)^{\frac{1}{2}} \Bigl(\sum_{n \geq 1} \frac{1}{|k_n^o|^2} \Bigr)^{\frac{1}{2}} \leq
		\sum_{n \geq 1} |\r_n| \Bigl(\sum_{n \geq 1} \frac{1}{|k_n^o|^2} \Bigr)^{\frac{1}{2}} < \iy.
	$$
	Here we used the H{\"o}lder inequality, $\r \in \ell^1 \ss \ell^2$ and
	$\displaystyle \sum_{n \geq 1} \frac{1}{|k_n^o|^2} < \iy$.

	Now, we show that $f \in \cW_+$. Since $f_o \in \cW_+$ and $f = \frac{f}{f_o} f_o$,
	it is sufficiently to show that $\frac{f}{f_o} \in \cW_+$. In order get this result, we introduce
	\[ \label{jost_pert:eq10}
		F(k) = \log \left( \frac{f(k)}{f_o(k)} \right),\qq k \in \R.
	\]
	Using the Hadamard factorization for $f$ and $f_o$, we have
	\[ \label{jost_pert:eq4}
		\begin{aligned}
			F(k) &= \log\left( \frac{f(0)}{f_o(0)} \lim_{r \to +\iy} \prod_{|k_n| \leq r} 
			\frac{1 - \frac{k}{k_n}}{1 - \frac{k}{k^o_n}} \right)
			= \log\left( \lim_{r \to +\iy} \prod_{|k_n| \leq r} \left(1 +  
			\frac{\r_n}{k^o_n - k} \right)\right)\\ 
			&= \lim_{r \to +\iy} \sum_{|k_n| \leq r} \log \left(1 +  
			\frac{\r_n}{k^o_n - k} \right) 
		\end{aligned}
	\]
	for any $k \in \R$. Now, we show that this series converges absolutely in $\hat \cL_+$. 
	By Lemma \ref{lm:wiener_estimate}, we have
	\[ \label{jost_pert:eq3}
		\left\| \frac{\r_n}{k^o_n - k} \right\|_{\hat \cL_+} = \frac{|\r_n|}{|\Im k^o_n|^{\frac{1}{2}}}\left(1 + \frac{1}{|\Im k^o_n|^{\frac{1}{2}}} \right).
	\]
	Recall that $|\Im k^o_n| \to \iy$ and $\r_n \to 0$ as $n \to \iy$. Thus, it follows from (\ref{jost_pert:eq3}) that
	$\left\| \frac{\r_n}{k^o_n - k} \right\|_{\hat \cL_+} \to 0$ as $n \to \iy$. Let $N \in \N$ be 
	such that $\left\| \frac{\r_n}{k^o_n - k} \right\|_{\hat \cL_+} < \frac{1}{4}$ for any $n > N$. 
	Thus, using Lemma \ref{lm:log_estimate} and estimate (\ref{jost_pert:eq3}), we obtain
	$$
		\left\| \log \left(1 + \frac{\r_n}{k^o_n - k} \right) \right\|_{\hat \cL_+} < \frac{C|\r_n|}{|\Im k^o_n|^{\frac{1}{2}}}\left(1 + \frac{1}{|\Im k^o_n|^{\frac{1}{2}}} \right)
	$$
	for any $n > N$, where the constant $C > 0$ does not depend on $n$.
	Using (\ref{jost_pert:eq4}), we get
	\begin{multline} \label{jost_pert:eq6}
		\sum_{n \geq 1} \Bigl\| \log \left(1 + \frac{\r_n}{k^o_n - k} \right) \Bigr\|_{\hat \cL_+} \leq\\
		\sum_{n = 1}^{N} \Bigl\| \log \left(1 + \frac{\r_n}{k^o_n - k} \right) \Bigr\|_{\hat \cL_+} + 
		\sum_{n > N} \frac{C|\r_n|}{|\Im k^o_n|^{\frac{1}{2}}}\left(1 + \frac{1}{|\Im k^o_n|^{\frac{1}{2}}} \right).
	\end{multline}
	Since $|\Im k^o_n| \to \iy$ as $n \to \iy$ and $|\Im k^o_n| \neq 0$ for any $n \geq 1$, 
	there exist a constant $C_1 > 0$ such that
	\[ \label{jost_pert:eq7}
		\frac{1}{|\Im k^o_n|^{\frac{1}{2}}}\left(1 + \frac{1}{|\Im k^o_n|^{\frac{1}{2}}} \right) < C_1
	\]
	for any $n \geq 1$. Due to Lemma \ref{lm:wiener_estimate} and Corollary \ref{p3c1}, 
	we have $\log \left(1 + \frac{\r_n}{k^o_n - k} \right) \in \hat \cL_+$ and then
	there exists a constant $C_2 > 0$ such that
	\[ \label{jost_pert:eq8}
		\Bigl\| \log \left(1 + \frac{\r_n}{k^o_n - k} \right) \Bigr\|_{\hat \cL_+} < C_2
	\]
	for any $1 \leq n \leq N$. Substituting (\ref{jost_pert:eq7}) and (\ref{jost_pert:eq8}) in (\ref{jost_pert:eq6}),
	we obtain
	\[ \label{jost_pert:eq9}
		\sum_{n \geq 1} \Bigl\| \log \left(1 + \frac{\r_n}{k^o_n - k} \right) \Bigr\|_{\hat \cL_+} \leq N C_2 + C C_1 \sum_{n > N} |\r_n|.
	\]
	Since $(\r_n)_{n \geq 1} \in \ell^1$, we have $\sum_{n > N} |\r_n| < \iy$ and then the series in (\ref{jost_pert:eq9})
	converges. Thus, we have $F \in \hat \cL_+$. Now, it follows from Corollary \ref{p3c1} that
	$\exp(F) \in \cW_{exp}$. Recall that $f_o \in \cW_{exp}$. 
	Thus, using (\ref{jost_pert:eq10}), we get $f = \exp(F) f_o \in \cW_{exp}$.
	
	Now, we show that $f \in \cJ$. Since $f \in \cW_{exp}$, it is bounded on $\R$ and then
	$$
		\int_{\R} \frac{\log(1 + |f(k)|) dk}{1 + k^2} < \iy.
	$$
	Recall that $f$ is an entire function of exponential type, which yields that $f \in \cE_{Cart}(\a,\b)$
	for some $\a,\b \in \R$.
	Moreover, we have $n(r,f) = \frac{2\g}{\pi} r + o(r)$ as $r \to \iy$. 
	Due to Theorem \ref{thm:lev}, we have $\t_+(f) + \t_-(f) = 2\g$. 
	Since $f \in \cW_+$, it follows from the Paley-Wiener theorem 
	that $\t_+(f) = 0$ and then we have $\t_-(f) = 2\g$. 
	Hence, we get $\t_+(f-1) = 0$ and $\t_-(f-1) = 2\g$. Moreover, we have showed above that
	$f - 1 \in \hat \cL_+$ and then, by the Plancherel theorem,
	we get $f - 1 \in L^2(\R)$. 
	Thus, using the Paley-Wiener theorem, we obtain
	$$
		f(k) - 1 = \int_0^{\g} g(s) e^{2iks} ds,\qq k \in \R,
	$$
	for some $g \in \cP$. Due to $f \in \cW_{exp}$, we have $f(k) \neq 0$ for any $k \in \ol \C_+$,
	 which yields that $f \in \cJ$.
	
	Finally, we show that $\r_{\cJ}(f,f_o) \to 0$ as $\| \r \|_{\ell^1} \to 0$. 
	If $\| \r_n \|_{\ell^1} < \ve$, then we have $|\r_n| < \ve$ for any $n \geq 1$.
	Using Lemma \ref{lm:wiener_estimate} and estimate (\ref{jost_pert:eq7}), we get
	$$
		\left\| \frac{\r_n}{k^o_n - k} \right\|_{\hat \cL_+} = \frac{|\r_n|}{|\Im k^o_n|^{\frac{1}{2}}}\left(1 + \frac{1}{|\Im k^o_n|^{\frac{1}{2}}} \right) < \frac{\ve}{|\Im k^o_n|^{\frac{1}{2}}}\left(1 + \frac{1}{|\Im k^o_n|^{\frac{1}{2}}} \right) < C_1 \ve
	$$
	for any $n \geq 1$ and then we get
	$$
		\left\| \frac{\r_n}{k^o_n - k} \right\|_{\hat \cL_+} < \frac{1}{4}
	$$
	for any $n \geq 1$ and $\ve < \frac{1}{4C_1}$. Thus, we can choose $N = 0$ in (\ref{jost_pert:eq9})
	for $\ve < \frac{1}{4C_1}$, which yields
	\[ \label{jost_pert:eq11}
		\left\| F \right\|_{\hat \cL_+} \leq 
		\sum_{n \geq 1} \| \log \left(1 + \frac{\r_n}{k^o_n - k} \right) \|_{\hat \cL_+} 
		\leq C C_1 \sum_{n > N} |\r_n| = C C_1 \| \r \|_{\ell^1}.
	\]
	It follows from Corollary \ref{p3c1}, that the mapping $F \mapsto \exp(F)$ from 
	$\hat \cL_+$ to $\cW_{exp}$ is continuous. Since $\exp(0) = 1$, we have $\|\exp(F) - 1\|_{\hat \cL_+} \to 0$ as
	$\left\| F \right\|_{\hat \cL_+} \to 0$ and then it follows from inequality 
	(\ref{jost_pert:eq11}) that
	$\|\exp(F) - 1\|_{\hat \cL_+} \to 0$ as $\| \r \|_{\ell^1}$. 
	Now, we consider $\r_{\cJ}(f,f_o)$. By the definition of
	the metric $\r_{\cJ}$, we have
	$$
		\r_{\cJ}(f,f_o) = \|f-f_o\|_{\hat \cL_+} = \|\exp(F) f_o - f_o\|_{\hat \cL_+} \leq \|f_o\|_{\cW_+} \|\exp(F) - 1\|_{\hat \cL_+}. 
	$$
	Thus, it follows from this inequality that $\r_{\cJ}(f,f_o) \to 0$ as 
	$\| \r \|_{\ell^1} \to 0$.
\end{proof}
\begin{proof}[\textbf{Proof of Theorem \ref{thm:main_thm}}]
	Let $\k^o = (k_n^o)_{n \geq 1}$ be zeros of $\psi_o = \psi(\cdot,q_o)$ for some $q_o \in \cP$ 
	arranged that $0 < |k^o_1| \leq |k^0_2| \leq \ldots$ and 
	let $\r = (\r_n)_{n \geq 1} \in \ell^1$ be such that $k_n = k_n^o + \r_n \in \C_-$ 
	for each $n \geq 1$. Due to Theorem \ref{thm:inv_jost}, we have $\psi_o \in \cJ$ and then, 
	by Theorem \ref{thm:jost_pert}, there exists a unique $\psi_1 \in \cJ$ such that 
	$(k_n)_{n \geq 1}$ are zeros of $\psi_1$. Now, it follows from Theorem \ref{thm:inv_jost} that
	there exists a unique $q \in \cP$ such that $\psi_1 = \psi(\cdot,q)$. Moreover, if $\r_{\cJ}(\psi_1,\psi_o) \to 0$
	then we have $\r_{\cP}(q,q_o) \to 0$. Due to Theorem \ref{thm:jost_pert}, we have 
	$\r_{\cJ}(\psi_1,\psi_o) \to 0$ as $\|\r\|_{\ell^1} \to 0$, which yields $\r_{\cP}(q,q_o) \to 0$.
\end{proof}

\begin{proof}[\textbf{Proof of Theorem \ref{thm:herm_pert}}]
	Let $E_o$ be Dirac-type Hermite-Biehler function, let 
	$\k^o = (k_n^o)_{n \geq 1}$ be its zeros arranged that $0 < |k^o_1| \leq |k^0_2| \leq \ldots$ and 
	let $\r = (\r_n)_{n \geq 1} \in \ell^1$ be such that $k_n = k_n^o + \r_n \in \C_-$ 
	for each $n \geq 1$. We introduce $\psi_o(k) = i e^{i\g k} E_o(k)$, $k \in \C$. 
	Due to Theorem \ref{thm4}, the function $\psi \in \cJ$ and then, by Theorem \ref{thm:jost_pert},
	there exists a unique $\psi \in \cJ$ such that $(k_n)_{n \geq 1}$ are zeros of $\psi$ and
	$\r_{\cJ}(\psi,\psi_o) \to 0$ as $\|\r\|_{\ell^1} \to 0$. Thus, it follows from Theorem \ref{thm4}
	that there exists a unique Dirac-type Hermite-Biehler function $E$ such that 
	$\k = (k_n)_{n \geq 1}$ are zeros of $E$ and given by $E(k) = -ie^{-i\g k}\psi(k)$, $k \in \C$.

	Now, we show the continuity. Using well-known properties of the Fourier transform, we get
	$$
		\|\cF^{-1}(E - E_o)\|_{L^2(-\frac{\g}{2},\frac{\g}{2})} = 
		\|\cF^{-1}(e^{-i g k} (\psi - \psi_o))\|_{L^2(-\frac{\g}{2},\frac{\g}{2})} =
		\|\cF^{-1}(\psi - \psi_o)\|_{L^2(0,\g)} = \r_{\cJ}(\psi,\psi_o), 
	$$
	which yields that $\|\cF^{-1}(E - E_o)\|_{L^2(-\frac{\g}{2},\frac{\g}{2})} \to 0$ as $\|\r\|_{\ell^1} \to 0$.
\end{proof}

\footnotesize

\no {\bf Acknowledgments.} D. M. is supported by the RFBR grant No. 19-01-00094.

\medskip


\begin{thebibliography}{MKS10}
	\bibitem{APT04}
    M.~J.~Ablowitz, B.~Prinari and A.~D.~Trubatch,
    \emph{Discrete and continuous nonlinear Schrödinger systems.}
    London Mathematical Society Lecture Note Series, 302. Cambridge University Press, Cambridge, 2004.

	\bibitem{B12}
	M.~Bledsoe,
	\emph{Stability of the inverse resonance problem on the line.}
	Inverse Problems \textbf{28}~(2012), no.~10, 105003, 20~pp.

	\bibitem{BBP}
    A.~Baranov, Y.~Belov and A.~Poltoratski,
    \emph{De Branges functions of Schroedinger equations.}
    Collect. Math. \textbf{68}~(2017), no.~2, 251--263.
	
	\bibitem{dB}
    L.~de~Branges,
    \emph{Hilbert spaces of entire functions.}
    Prentice-Hall, Inc., Englewood Cliffs, N.J. 1968.
	
    \bibitem{BKW03}
    B.M.~Brown, I.~Knowles and R.~Weikard,
    \emph{On the inverse resonance problem.}
    J. London Math. Soc.~(2)~\textbf{68}~(2003), no.~2, 383--401.

    \bibitem{C06}
    T.~Christiansen,
    \emph{Resonances for steplike potentials: forward and inverse results.}
    Trans. Amer. Math. Soc. \textbf{358}~(2006), no.~5, 2071--2089.
	
    \bibitem{DEGM82}
    K.~R.~Dodd, J.~C.~Eilbeck, J.~D.~Gibbon and H.~C.~Morris,
    \emph{Solitons and nonlinear wave equations.}
    Academic Press, Inc. [Harcourt Brace Jovanovich, Publishers], London-New York, 1982.

    \bibitem{DZ19}
    S.~Dyatlov and M.~Zworski,
    \emph{Mathematical theory of scattering resonances.}
    Graduate Studies in Mathematics, 200.
    American Mathematical Society, Providence, RI, 2019.

    \bibitem{FT07}
    L.~D.~Faddeev and L.~A.~Takhtajan,
    \emph{Hamiltonian methods in the theory of solitons.}
    Translated from the 1986 Russian original by Alexey G. Reyman.
    Reprint of the 1987 English edition. Classics in Mathematics. Springer, Berlin, 2007.


    \bibitem{F97}
    R.~Froese,
    \emph{Asymptotic distribution of resonances in one dimension.}
    J. Differential Equations \textbf{137}~(1997), no.~2, 251--272.

    \bibitem{GRS64}
    I.~Gelfand, D.~Raikov and G.~Shilov,
    \emph{Commutative normed rings.}
    Translated from the Russian, with a supplementary chapter.
	Chelsea Publishing Co., New York, 1964.
	

    \bibitem{GK67}
    I.~C.~Gohberg and M.~G.~Kre{\u \i}n,
    \emph{Theory and applications of Volterra operators in Hilbert space.}
    Translated from the Russian by A. Feinstein. Translations of Mathematical Monographs, Vol. 24
    American Mathematical Society, Providence, R.I. 1970

    \bibitem{H99}
    M.~Hitrik,
    \emph{Bounds on scattering poles in one dimension.}
    Comm. Math. Phys. \textbf{208}~(1999), no.~2, 381--411.


    \bibitem{I18}
    A.~Iantchenko,
    \emph{Quasi-normal modes for Dirac fields in the Kerr--Newman--de Sitter black holes.}
    Anal. Appl. (Singap.) \textbf{16}~(2018), no.~4, 449--524.

    \bibitem{IK14a}
    A.~Iantchenko and E.~Korotyaev,
    \emph{Resonances for 1D massless Dirac operators.}
    J. Differential Equations \textbf{256}~(2014), no.~8, 3038--3066.

    \bibitem{IK14b}
    A.~Iantchenko and E.~Korotyaev,
    \emph{Resonances for Dirac operators on the half-line.}
    J. Math. Anal. Appl. \textbf{420}~(2014), no.~1, 279--313.

    \bibitem{IK15}
    A.~Iantchenko and E.~Korotyaev,
    \emph{Resonances for the radial Dirac operators.}
    Asymptot. Anal. \textbf{93}~(2015), no.~4, 327--369.

    \bibitem{Koo98}
    P.~Koosis,
    \emph{The logarithmic integral. I.}
    Corrected reprint of the 1988 original.
    Cambridge Studies in Advanced Mathematics, 12.
    Cambridge University Press, Cambridge, 1998.
	

    \bibitem{K04a}
    E.~Korotyaev,
    \emph{Inverse resonance scattering on the half line.}
    Asymptot. Anal. \textbf{37}~(2004), no.~3-4, 215--226.

    \bibitem{K04b}
    E.~Korotyaev,
    \emph{Stability for inverse resonance problem.}
    Int. Math. Res. Not. (2004), no.~73, 3927--3936.

    \bibitem{K05}
    E.~Korotyaev,
    \emph{Inverse resonance scattering on the real line.}
    Inverse Problems \textbf{21}~(2005), no.~1, 325--341.

    \bibitem{K11h}
    E.~Korotyaev,
    \emph{Resonance theory for perturbed Hill operator.}
    Asymp. Anal. \textbf{74}~(2011), no.~3-4, 199--227.

    \bibitem{K14}
    E.~Korotyaev,
    \emph{Global estimates of resonances for 1D Dirac operators.}
    Lett. Math. Phys. \textbf{104}~(2014), no.~1, 43--53.

    \bibitem{K16}
    E.~Korotyaev,
    \emph{Estimates of 1D resonances in terms of potentials.}
    J. Anal. Math. \textbf{130}~(2016), 151--166.

    \bibitem{K17}
    E.~Korotyaev,
    \emph{Resonances for 1d Stark operators.}
    J. Spectr. Theory \textbf{7}~(2017), no.~3, 699--732.
	
	\bibitem{KM20a}
    E.~Korotyaev and D.~Mokeev,
    \emph{Inverse resonance scattering for Dirac operators on the half-line.}
	Preprint~(2020), arXiv:~2003.12502.
	
	\bibitem{KM20b}
    E.~Korotyaev and D.~Mokeev,
    \emph{Inverse resonance scattering for massless Dirac operators on the real line.}
    Preprint~(2020), arXiv:~2009.07031.

    \bibitem{L96}
    B.~Ya.~Levin,
    \emph{Lectures on entire functions.}
    In collaboration with and with a preface by Yu.~Lyubarskii, M.~Sodin and V.~Tkachenko.
    Translated from the Russian manuscript by Tkachenko.
    Translations of Mathematical Monographs, 150.
    American Mathematical Society, Providence,~RI,~1996.

    \bibitem{LS91}
    B.~M.~Levitan and I.~S.~Sargsjan,
    \emph{Sturm-Liouville and Dirac operators.}
    Translated from the Russian.
    Mathematics and its Applications (Soviet Series),~59.
    Kluwer Academic Publishers Group, Dordrecht, 1991.
	
	\bibitem{P}
    N.~Makarov and A.~Poltoratski,
    \emph{Two-spectra theorem with uncertainty.}
    J. Spectr. Theory \textbf{9}~(2019), no.~4, 1249--1285.

    \bibitem{MSW10}
    M.~Marletta, R.~Shterenberg and R.~Weikard,
    \emph{On the inverse resonance problem for Schr{\"o}dinger operators.}
    Comm. Math. Phys. \textbf{295}~(2010), no.~2, 465--484.
	
	\bibitem{R02}
    C.~Remling,
    \emph{Schr{\"o}dinger operators and de Branges spaces.}
    J. Funct. Anal. \textbf{196}~(2002), no.~2, 323--394.
	
    \bibitem{RS80}
    M.~Reed and B.~Simon,
    \emph{Methods of modern mathematical physics. I. Functional analysis.}
    Second edition.
    Academic Press, Inc. [Harcourt Brace Jovanovich, Publishers], New York, 1980.
	
	\bibitem{R14}
    R.~V.~Romanov,
    \emph{Canonical systems and de Branges spaces}, 2014.
    Preprint, arXiv:1408.6022.

    \bibitem{S00}
    B.~Simon,
    \emph{Resonances in one dimension and Fredholm determinants.}
    J. Funct. Anal. \textbf{178}~(2000), no.~2, 396--420.

	
	
	
    \bibitem{ZS71}
    V.~E.~Zakharov and A.~B.~Shabat,
    \emph{Exact theory of two-dimensional self-focusing and one-dimensional self-modulation of
    waves in nonlinear media.}
    Soviet Physics JETP \textbf{34}~(1972), no.~1, 62--69.;
    translated from Zh. Eksp. Teor. Fiz. \textbf{61}~(1971), no.~1, 118--134 (in Russian).

	\bibitem{Z87}
    M.~Zworski,
    \emph{Distribution of poles for scattering on the real line.}
    J. Funct. Anal. \textbf{73}~(1987), no.~2, 277--296.
	
    \bibitem{Z02}
    M.~Zworski,
    \emph{A remark on isopolar potentials.}
    SIAM, J.  Math. Analysis, \textbf{82}~(2002), no.~6, 1823--1826.

\end{thebibliography}
\end{document}